\documentclass[journal]{IEEEtran}

\newif\ifIFAC
\IFACfalse

\usepackage{graphicx}

\usepackage[cmex10]{amsmath}
\usepackage{amsfonts}
\usepackage{amssymb}
\usepackage{accents}
\usepackage{url}

\usepackage{empheq}
\usepackage[normalem]{ulem}

\usepackage{enumerate}

\usepackage{tikz}
\usepackage{pgfplots}
\usetikzlibrary{patterns}

\DeclareMathOperator{\diag}{diag}

% Indicator function / Kronecker delta
\newcommand{\ind}[1]{\text{I}\{#1\}}
% Kronecker delta
% \newcommand{\ind}[1]{\delta_{#1}}

% Expected value

\usepackage{mathtools}
\DeclarePairedDelimiter{\PrfencesHard}{[}{]}

\DeclarePairedDelimiter{\PrfencesSoft}{(}{)}
\renewcommand{\Pr}{\operatorname{Pr}\PrfencesHard}

\renewcommand{\diag}{\operatorname{diag}\PrfencesSoft}
\renewcommand{\vec}{\operatorname{vec}\PrfencesSoft}
\newcommand{\rank}{\operatorname{rank}\PrfencesSoft}

\newcommand{\fref}[1]{Fig.~\ref{#1}}

\newcommand{\eref}[1]{(\ref{#1})}
\newcommand{\sref}[1]{Section~\ref{#1}}
\newcommand{\algoref}[1]{Algorithm~\ref{#1}}
\newcommand{\thrmref}[1]{Theorem~\ref{#1}}

\newcommand{\pref}[1]{Problem~\ref{#1}}

\newcommand{\lref}[1]{Lemma~\ref{#1}}

\ifIFAC
\usepackage{enumitem} % To be able to do (A1), (A2), .. enumeration
\else

%\@ifclassloaded{<class>}{<true>}{<false>}
\makeatletter%
\@ifclassloaded{IEEEconf}{
% ieeeconf and amsthm conflicts                                        
% These lines resolves the error                                                             
\makeatother%

}{}

%\@ifclassloaded{<class>}{<true>}{<false>}
\makeatletter%
\@ifclassloaded{ieeeconf}{
% ieeeconf and amsthm conflicts                                        
% These lines resolves the error                                                             
\makeatother%

}{}

\usepackage{amsthm}
\fi

\theoremstyle{plain} % recommended for theorems, corollaries, lemmas, propositions
\newtheorem{theorem}{Theorem}
\newtheorem{lemma}{Lemma}

\theoremstyle{definition} % recommended for definitions, conditions, problems, and examples

\newtheorem{problem}{Problem}
\newtheorem{assumption}{Assumption}

\theoremstyle{remark} % recommended for remarks, notes, notation, claims
\newtheorem{remark}{Remark}

\newcommand{\Rb}{{\mathbb R}}

\usepackage{dsfont}
\newcommand{\ones}{\mathds{1}}

\newcommand{\eqdef}{\overset{\text{def.}}{=}}  

\usepackage{subcaption}

\bibliographystyle{ieeetr}

\usepackage{nicefrac}

\newcommand{\asref}[1]{Assumption~\ref{#1}}

\usepackage{balance}
\usepackage{cite}

\usepackage{algorithm,algorithmic}

 % Action variable
\newcommand{\rpolicy} {G} % Policy variable
\newcommand{\Bysum} {L} % Sum over the \diag{b_y} P^T :s

\newcommand{\fact} {u} % Full inverse filter action 
\newcommand{\factset} {\mathcal{U}} % Full inverse filter action set
\newcommand{\factdim} {U} % Full inverse filter action dimension

\newcommand{\vspan}{\operatorname{span}\PrfencesSoft}

\title{\huge Inverse Filtering for Hidden Markov Models with \\ Applications to Counter-Adversarial Autonomous Systems}

\begin{document}

\author{Robert~Mattila,~\IEEEmembership{Student Member,~IEEE,}
        Cristian~R.~Rojas,~\IEEEmembership{Member,~IEEE,}\\
        Vikram~Krishnamurthy,~\IEEEmembership{Fellow,~IEEE,}
        and~Bo~Wahlberg,~\IEEEmembership{Fellow,~IEEE}% <-this % stops a space
        \thanks{This work was supported by the Swedish Research Council (2016-06079) and
the U.S. Air Force Office of Scientific Research (FA9550-18-1-0007).}
\thanks{R.~Mattila, C.~R.~Rojas and B.~Wahlberg are with the Division of Decision and
    Control Systems, School of Electrical Engineering and Computer Science, KTH Royal
    Institute of Technology, Stockholm, Sweden. E-mails: {\tt\footnotesize \{rmattila,
        crro, bo\}@kth.se}.}% <-this % stops a space
\thanks{V.~Krishnamurthy is with the Department of Electrical
    and Computer Engineering, Cornell University, Ithaca, New York, USA. E-mail: {\tt\footnotesize
    vikramk@cornell.edu}.}% <-this % stops a space
}

%\markboth{}%
%{}

\maketitle

\begin{abstract}

    Bayesian filtering deals with computing the posterior distribution of the state of
    a stochastic dynamic system given noisy observations. In this paper, motivated by
    applications in counter-adversarial systems, we consider the following inverse
    filtering problem: Given a sequence of posterior distributions from a Bayesian filter,
    what can be inferred about the transition kernel of the state, the observation
    likelihoods of the sensor and the measured observations? For finite-state Markov
    chains observed in noise (hidden Markov models), we show that a least-squares fit for
    estimating the parameters and observations amounts to a combinatorial optimization
    problem with non-convex objective. Instead, by exploiting the algebraic structure of
    the corresponding Bayesian filter, we propose an algorithm based on convex
    optimization for reconstructing the transition kernel, the observation likelihoods and
    the observations. We discuss and derive conditions for identifiability. As an
    application of our results, we illustrate the design of counter-adversarial systems:
    By observing the actions of an autonomous enemy, we estimate the accuracy of its
    sensors and the observations it has received. The proposed algorithms are evaluated in
    numerical examples.

\end{abstract}

\begin{IEEEkeywords}%
    inverse filtering, hidden Markov models, counter-adversarial autonomous systems,
    remote calibration, adversarial signal processing
\end{IEEEkeywords}

% \IEEEpeerreviewmaketitle

\section{Introduction}

\IEEEPARstart{I}{n} a partially observed stochastic dynamic system, the state is hidden in
the sense that it can only be observed in noise via a sensor. Formally, with $p$
denoting a probability density (or mass) function, such a system is represented by the
conditional densities: 
\begin{align}
    x_k &\sim P_{x_{k-1}, x} = p(x | x_{k-1}), \quad x_0 \sim \pi_0, \label{eq:hmm_state} \\
    y_k &\sim B_{x_k, y} = p(y | x_k), \label{eq:hmm_observation}
\end{align}
where by $\sim$ we mean ``distributed according to'' and $k$ denotes discrete time. In
\eref{eq:hmm_state}, the state $x_k$ evolves according to a Markovian transition kernel
$P$ on state-space $\mathcal{X}$, and $\pi_0$ is its initial distribution. In
\eref{eq:hmm_observation}, an observation $y_k$ (in observation-space $\mathcal{Y}$) of
the state is measured at each time instant according to observation likelihoods $B$. An
important example of \eref{eq:hmm_state}-\eref{eq:hmm_observation}, where
\eref{eq:hmm_state} is a finite-state Markov chain, is the so called \emph{hidden Markov
model} (HMM) \cite{krishnamurthy_partially_2016, cappe_inference_2005}.

In the Bayesian (stochastic) filtering problem \cite{anderson_optimal_1979}, one seeks to
compute the conditional expectation of the state given noisy observations by evaluating
a recursive expression for the posterior distribution of the state:
\begin{equation}
    \pi_k(x) = p(x_k = x | y_1, \dots, y_k), \qquad x \in \mathcal{X}.
    \label{eq:state_posterior}
\end{equation}
The recursion for the posterior is given by  the Bayesian filter
\begin{equation}
    \pi_k = T(\pi_{k-1}, y_k; P, B),
    \label{eq:bayesian_filter_recursion}
\end{equation}
where
\begin{equation}
    \{T(\pi, y ; P, B)\}(x) = \frac{B_{x, y} \int_\mathcal{X} P_{\zeta, x} \pi(\zeta)
    d\zeta}{\int_\mathcal{X} B_{x, y} \int_\mathcal{X} P_{\zeta, x} \pi(\zeta)
d\zeta dx}, \quad x \in \mathcal{X},
    \label{eq:bayesian_filter}
\end{equation}
-- see, e.g., \cite{cappe_inference_2005, krishnamurthy_partially_2016} for derivations
and details. Two well known finite dimensional cases of \eref{eq:bayesian_filter} are the
Kalman filter, where the dynamical system \eref{eq:hmm_state}-\eref{eq:hmm_observation} is
a linear Gaussian state-space model, and the HMM filter, where the state is a finite-state
Markov chain.  

In this paper, we treat and provide solutions to the following inverse filtering problem:
\begin{quote}

    \emph{Given a sequence of posteriors $\pi_1, \dots, \pi_N$ from the filter
    \eref{eq:bayesian_filter_recursion}, reconstruct (estimate) the filter's parameters:
the system's transition kernel $P$, the sensor's observation likelihoods $B$ and the
measured observations $y_1, \dots, y_N$.}

\end{quote}

An important motivating application is the design of counter-adversarial
systems\cite{kuptel_counter_2017, krishnamurthy_how_2019, mattila_what_2020}: Given
measurements of the actions of a sophisticated autonomous adversary, how to remotely
calibrate (i.e., estimate) its sensors and predict, so as to guard against, its future
actions? We refer the reader to \fref{fig:full_caa_schematic_simple} on the next page for
a schematic overview.

This paper extends our recent work \cite{mattila_inverse_2017, krishnamurthy_how_2019,
mattila_what_2020} in two important ways: First, \cite{mattila_inverse_2017,
krishnamurthy_how_2019, mattila_what_2020} assumed that both the enemy and us know the
transition kernel $P$. In reality, if we generate the signal $x_k$, then the enemy
estimates $P$ (e.g., maximum likelihood estimate) and we have to \emph{estimate the
enemy's estimate} of $P$. The first part of this paper constructs algorithms for doing
this based on observing (intercepting) posterior distributions.  In the second part, we
consider the generalized setting where the enemy's posteriors are observed in noise via
some policy. Second, \cite{mattila_inverse_2017, krishnamurthy_how_2019,
mattila_what_2020} did not deal with identifiability issues in inverse filtering. The
current paper gives necessary and sufficient conditions for identifiability of $P$ and $B$
given a sequence of posteriors.

\begin{figure*}[ht!]

    \hspace{0.4cm} \includegraphics{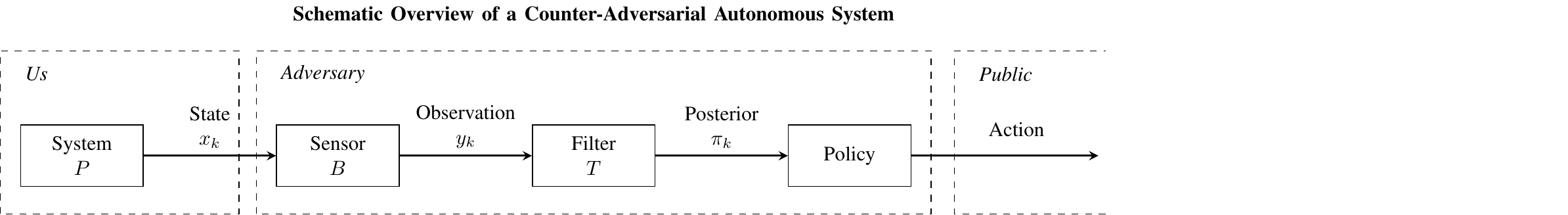}

\caption{An autonomous adversary measures our state $x_k$ as $y_k$ via a sensor $B$.
    A stochastic filter $T$ is used to compute its posterior $\pi_k$ of our state. Based
    on this posterior and a policy, a public action is taken. A counter-adversarial
    autonomous system aims to devise a system and associated algorithms that can infer
    information private to the adversary. This information forms a basis for predicting
    and taking suitable measures against future actions. In this paper, we consider the
    remote sensor calibration problem (\pref{pr:remote_calibration}), where the goal is to
estimate the adversary's sensor $B$ based on observed actions.}

    \label{fig:full_caa_schematic_simple}

\end{figure*}

In addition, the filter \eref{eq:bayesian_filter} is a crucial component of many
engineering systems; success stories include, for example, early applications in aerospace
\cite{mcgee_discovery_1985} and, more recently, the \emph{global positioning system} (GPS;
\cite{kaplan_understanding_2005}). It can be difficult, or even impossible, to access raw
sensor data in integrated smart sensors since they are often tightly encapsulated.  The
ability to reverse engineer the parameters of a filtering system from only its output
suggests novel ways of performing fault detection and diagnosis (see, e.g.,
\cite{sundvall_fault_2006, wahlberg_observers_2009} for motivating examples) -- the most
obvious being to compare reconstructed parameters to their nominal values, or perform
change-detection on multiple data batches. Moreover, in cyber-physical security
\cite{li_jamming_2015}, the algorithms proposed in this paper could be used to detect
malicious attacks by an intruder of a control system.

\subsection{Main Results and Outline}

To construct a tractable analysis, we consider the case where
\eref{eq:hmm_state}-\eref{eq:hmm_observation} constitute a \emph{hidden Markov model}
(HMM) on a finite observation-alphabet. The main results of this paper are:
\begin{itemize}

    \item We analyze the uniqueness of the updates of the stochastic filter
        \eref{eq:bayesian_filter} for HMMs (Theorems~\ref{thrm:Bayesian_filter_unique} and
        \ref{thrm:Bayesian_filter_unique_subset}), and derive an alternative
        characterization (\thrmref{thrm:hmm_filter_nullspace_PB}) that highlights
        important structural properties.

    \item We introduce the nullspace clustering problem (\pref{pr:nullspace_clustering})
        -- which is complementary to the subspace clustering problem
        \cite{vidal_subspace_2011} -- and propose an algorithm based on the group LASSO
        \cite{yuan_model_2006} to solve it. In \thrmref{thrm:reconstruct_P_and_B}, we
        detail a procedure to uniquely factorize unnormalized nullspaces into HMM parameters.

    \item By leveraging the previous two points we demonstrate how the transition kernel
        as well as the observation likelihoods of an HMM can be reconstructed from
        a sequence of posteriors (\algoref{alg:full_inverse_filtering}); then, the
        corresponding sequence of observations can trivially be reconstructed
        (Remark~\ref{rem:obtaining_observations}).

    \item We apply our results to the remote sensor calibration problem
        (\pref{pr:remote_calibration}) for counter-adversarial autonomous systems. Even in
        a mismatched setting (i.e., where the adversary employs uncertain estimates
        $\hat{P}$ and $\hat{B}$ in the filter updates), we can estimate the adversary's
        sensor, and that too regardless of the quality of its estimates.

    \item Finally, the performance of our proposed inverse filtering algorithms is
        demonstrated in numerical examples, where we find that a surprisingly small
        amount of posteriors is sufficient to reconstruct the sought parameters. 

        %and we compare their performance to a direct approach based on combinatorial optimization.

\end{itemize}

The paper is structured as follows. \sref{sec:preliminaries} formulates the problems we
consider, discusses identifiability, and shows that a direct approach is computationally
infeasible for large data sizes. Our proposed inverse filtering algorithms are given in
\sref{sec:proposed method}. In \sref{sec:application_caa}, we consider the design of
counter-adversarial autonomous systems and show how an adversary's sensors can be
estimated from its actions.  The proposed algorithms are evaluated in \sref{sec:numerical}
in numerical examples.  Detailed proofs and algebraic manipulations are available in the
supplementary material.

\subsection{Related Work}

Kalman's inverse optimal control paper \cite{kalman_when_1964} from 1964, aiming to
determine for what cost criteria a given control policy is optimal, is an early example of
an inverse problem in signal processing and automatic control. More recently, an interest
for similar problems has been sparked in the machine learning community with the success
of topics such as inverse reinforcement learning, imitation learning and apprenticeship
learning \cite{hadfield-menell_cooperative_2016, choi_nonparametric_2012,
klein_inverse_2012, levine_nonlinear_2011, ng_algorithms_2000} in which an agent learns
by observing an expert performing a task.

Variations of inverse filtering problems can be found in the microeconomics literature
(social learning; \cite{chamley_rational_2004}) and the fault detection literature (e.g.,
\cite{gertler_fault_1998, gustafsson_adaptive_2000, gustafsson_statistical_2007,
chen_robust_1999}), where the stochastic filter is a standard tool. For example, the
works \cite{sundvall_fault_2006, wahlberg_observers_2009} were motivated by fault
detection in mobile robots and aimed to reconstruct sensor data from from state estimates
by constructing an extended observer.

To the best of the authors' knowledge, the specific inverse filtering problem we consider
-- reconstructing system and sensor parameters directly from posteriors -- was first
introduced in \cite{mattila_inverse_2017} for HMMs, and later discussed for linear
Gaussian state-space models in \cite{mattila_inverse_2018}. In both these papers, strong
simplifying assumptions were made. In contrast to the present work, it was assumed that
\emph{i)} the transition kernel $P$ of the system was known, and that \emph{ii)} the
system and the filter were matched in the sense that the update $T(\pi_{k-1}, y_k; P, B)$
was used and not the more realistic mismatched $T(\pi_{k-1}, y_k; \hat{P}, \hat{B})$,
where $\hat{P}$ and $\hat{B}$ denote estimates. The algorithms we propose in this paper
extend \cite{mattila_inverse_2017} to not require knowledge of the transition dynamics,
and are agnostic to whether the filter is mismatched or not.

The latter is of crucial importance when applying inverse filtering algorithms in
counter-adversarial scenarios \cite{kuptel_counter_2017, krishnamurthy_how_2019,
mattila_what_2020}. In such, an adversary is trying to estimate our state (via Bayesian
filtering) and does not, in general, have access to our transition kernel -- recall the
setup from \fref{fig:full_caa_schematic_simple}. Hence, its filtering system is mismatched
(e.g., a maximum likelihood estimate $\hat{P}$ computed by the adversary is used instead
of the true $P$). Compared to \cite{krishnamurthy_how_2019, mattila_what_2020} that aim to
estimate information private to the adversary, the present work does not assume knowledge
of the adversary's filter parameters, nor that its filtering system is matched.

\section{Preliminaries and Problem Formulation}
\label{sec:preliminaries}

In this section, we first detail our notation and provide necessary background material on
hidden Markov models and their corresponding stochastic filter. We then formally state the
problem we consider, and discuss the uniqueness of its solution. Finally, we outline
a ``direct'' approach to the problem and point to potential computational concerns.

\subsection{Notation}

All vectors are column vectors unless transposed. The vector of all ones is denoted
$\ones$ and the $i$th Cartesian basis vector $e_i$. The element at row $i$ and column $j$
of a matrix is $[\cdot]_{ij}$, and the element at position $i$ of a vector is $[\cdot]_i$.
The vector operator $\diag{\cdot}:\mathbb{R}^n \rightarrow \mathbb{R}^{n\times n}$ gives
the matrix where the vector has been put on the diagonal, and all other elements are zero.
The indicator function $\ind{\cdot}$ takes the value 1 if the expression $\cdot$ is
fulfilled and 0 otherwise. The unit simplex is denoted as $\Delta$.
The nullspace of a matrix is $\ker{}$, and $^\dagger$ denotes pseudo-inverse.

\subsection{Hidden Markov Models}

We refer to a partially observed dynamical model
\eref{eq:hmm_state}-\eref{eq:hmm_observation} whose state space $\mathcal{X} = \{1, \dots,
X\}$ is discrete as a \emph{hidden Markov model} (HMM). We limit ourselves to HMMs with
observation processes on a finite alphabet $\mathcal{Y} = \{1, \dots, Y\}$. 

For such HMMs, the state $x_k$ evolves according to the $X \times X$ transition probability
matrix $P$ with elements 
\begin{equation}
    [P]_{ij} = \Pr{x_{k+1} = j | x_k = i}, \quad i,j \in \mathcal{X}.
\end{equation}
The corresponding observation $y_k$ is generated according to the $X \times Y$ observation
probability matrix $B$ with elements
\begin{equation}
    [B]_{ij} = \Pr{y_k = j | x_k = i}, \quad i \in \mathcal{X}, j \in \mathcal{Y}.
\end{equation}
We denote column $y$ of the observation matrix as $b_y \in \Rb^X$ -- therefore
\begin{equation}
    B = \begin{bmatrix} b_1 & \dots & b_Y \end{bmatrix}.
\end{equation}
Note
that both $P$ and $B$ are row-stochastic matrices; their elements are non-negative and the
elements in each row sum to one.

Under this model structure, it can be shown -- see \cite{krishnamurthy_partially_2016} or
\cite{cappe_inference_2005} for complete treatments -- that the Bayesian filter
\eref{eq:bayesian_filter} for updating the posterior takes the form
\begin{equation}
    \pi_{k} = T(\pi_{k-1}, y_k; P, B) = \frac{\diag{b_{y_{k}}} P^T \pi_{k-1}}{\ones^T \diag{b_{y_{k}}} P^T \pi_{k-1}},
    \label{eq:hmm_filter}
\end{equation}
initialized by $\pi_0$,\footnote{For notational simplicity, we assume that the initial
prior for the filter is the same as the initial distribution of the HMM.} which we refer to
as the \emph{HMM filter}. Here, the posterior $\pi_k \in \Rb^X$ has elements
\begin{equation}
    [\pi_k]_i = \Pr{x_k = i | y_1, \dots, y_k },
\end{equation}
for $i = 1, \dots, X$. Note that $\pi_k \in \{\pi \in \Rb^X : \pi \geq 0, \ones^T \pi
= 1 \} \eqdef \Delta \subset \Rb^X$. That is, the posterior $\pi_k$ lies on the
$(X-1)$-dimensional unit simplex.

\subsection{Inverse Filtering for HMMs}

Although the problems we consider in this paper can be generalized to partially observed
models \eref{eq:hmm_state}-\eref{eq:hmm_observation} on general state and observation
spaces, to obtain tractable algorithms and analytical expressions, we limit ourselves to
only discrete HMMs as introduced in the previous section:
\begin{problem}[Inverse Filtering for HMMs] 

    Given a sequence of posteriors $\pi_0, \pi_1, \dots, \pi_N \in
    \Rb^X$ from an HMM filter \eref{eq:hmm_filter} with known state and observation
    dimensions $X$ and $Y$, reconstruct the following quantities: \emph{i)} the
    transition matrix $P$; \emph{ii)} the observation matrix $B$; \emph{iii)} the
    observations $y_1, \dots, y_N$.

    \label{pr:inverse_filter_discrete}
\end{problem}

To ensure that the problem is well posed, and to simplify our analysis, we make the
following two assumptions:
\begin{assumption}[Ergodicity]
    The transition matrix $P$ and the observation matrix $B$ are elementwise (strictly)
    positive.
    \label{as:P_B_positive}
\end{assumption}

\begin{assumption}[Identifiability] 
    The transition matrix $P$ and the observation matrix $B$ are full column rank.
    \label{as:P_B_full_column}
\end{assumption}

\asref{as:P_B_positive} serves as a proxy for ergodicity of the HMM and the HMM filter --
it is a common assumption in statistical inference for HMMs \cite{baum_statistical_1966,
cappe_inference_2005}. \asref{as:P_B_full_column} is related to identifiability and
assures that no state or observation distribution is a convex combination of that of
another. 

\begin{remark}
    
    Neither of these two assumptions is strict; we violate \asref{as:P_B_positive} in the
    numerical experiments in \sref{sec:numerical}, and \asref{as:P_B_full_column} could be
    relaxed according \cite[Sec.  2.4]{mattila_inverse_2017}. However, they simplify our
    analysis and the presentation.

\end{remark}

\subsection{Identifiability in Inverse Filtering}

Under Assumptions~\ref{as:P_B_positive} and \ref{as:P_B_full_column}, we have the
following identifiability result:
\begin{theorem}
    Suppose that two HMMs $P$, $B$ and $\tilde{P}$, $\tilde{B}$ both satisfy Assumptions
    \ref{as:P_B_positive} and \ref{as:P_B_full_column}. Then the HMM filter is uniquely
    identifiable in terms of the transition and observation matrices. That is, for each observation $y = 1, \dots,
    Y$,
    \begin{equation}
        T(\pi, y; P, B) = T(\pi, y; \tilde{P}, \tilde{B}), \qquad \forall\pi \in \Delta,
        \label{eq:T_equal_T_tilde}
    \end{equation}
    if and only if $P = \tilde{P}$ and $B = \tilde{B}$.

    \label{thrm:Bayesian_filter_unique}
\end{theorem}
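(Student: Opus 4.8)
The ``if'' direction is immediate, since the filter map $T(\cdot,\cdot;P,B)$ is completely determined by $P$ and $B$. So the plan is to attack the ``only if'' direction. First I would introduce the shorthand $M_y \eqdef \diag{b_{y}}P^T$ and $\tilde{M}_y \eqdef \diag{\tilde{b}_{y}}\tilde{P}^T$, so that \eref{eq:hmm_filter} reads $T(\pi,y;P,B) = M_y\pi/(\ones^T M_y\pi)$. Under \asref{as:P_B_positive}, for any $\pi\in\Delta$ both $M_y\pi$ and $\tilde{M}_y\pi$ are strictly positive vectors (as $b_y,\tilde b_y>0$, $P,\tilde P>0$, $\pi\geq 0$, $\pi\neq 0$), so the normalizing denominators are nonzero and the hypothesis \eref{eq:T_equal_T_tilde} forces $M_y\pi = \lambda(\pi,y)\,\tilde{M}_y\pi$ for some positive scalar $\lambda(\pi,y)$. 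Under \asref{as:P_B_full_column}, the square matrix $P$ is full column rank hence invertible, and $\diag{b_{y}}$ is invertible by strict positivity, so $\tilde{M}_y$ is invertible.

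The crux of the argument is to show that the per-$\pi$ proportionality constant $\lambda(\pi,y)$ is in fact independent of $\pi$. Writing $A_y \eqdef \tilde{M}_y^{-1}M_y$, the relation becomes $A_y\pi = \lambda(\pi,y)\,\pi$ for every $\pi\in\Delta$; that is, every point of the simplex is an eigenvector of $A_y$. Evaluating at the vertices $e_i$ shows that each $e_i$ is an eigenvector, say with eigenvalue $\mu_i$, and evaluating at the midpoints $\tfrac{1}{2}(e_i+e_j)$ together with linearity forces $\mu_i=\mu_j$ for all $i,j$. Hence $A_y = c_y I$ and therefore $M_y = c_y\,\tilde{M}_y$ for a single positive scalar $c_y$. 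I expect this step --- turning a pointwise collinearity into a global scalar identity --- to be the main obstacle, and the invertibility supplied by the two assumptions is exactly what makes it go through.

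It remains to pin down the constants and extract $P$ and $B$ using row-stochasticity. Left-multiplying $M_y=c_y\tilde{M}_y$ by $\ones^T$ and using $\ones^T\diag{b_{y}} = b_y^T$ gives $b_y^T P^T = c_y\,\tilde{b}_y^T\tilde{P}^T$, i.e. $P b_y = c_y\,\tilde{P}\tilde{b}_y$. Summing over $y$ and invoking $\sum_y b_y = \ones$ (rows of $B$ sum to one), $P\ones=\tilde{P}\ones=\ones$, and invertibility of $\tilde{P}$ yields $\sum_y c_y\tilde{b}_y = \ones = \sum_y\tilde{b}_y$, so that $\sum_y (c_y-1)\tilde{b}_y = 0$. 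Since $\tilde{B}$ is full column rank its columns $\tilde{b}_1,\dots,\tilde{b}_Y$ are linearly independent, forcing $c_y=1$ for every $y$.

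Finally, with $c_y=1$ I would sum the identity $\diag{b_{y}}P^T = \diag{\tilde{b}_{y}}\tilde{P}^T$ over $y$; because $\sum_y\diag{b_{y}} = \diag{\ones} = I = \sum_y\diag{\tilde{b}_{y}}$, this collapses to $P^T=\tilde{P}^T$, whence $P=\tilde{P}$. Substituting back and using invertibility of $P^T$ gives $\diag{b_{y}}=\diag{\tilde{b}_{y}}$, i.e. $b_y=\tilde{b}_y$ for all $y$, so $B=\tilde{B}$, completing the proof.
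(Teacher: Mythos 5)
Your proof is correct and follows essentially the same route as the paper's: both reduce the hypothesis to the pointwise collinearity $\diag{b_y}P^T\pi = \lambda(\pi,y)\,\diag{\tilde{b}_y}\tilde{P}^T\pi$, promote $\lambda$ to a per-$y$ constant by testing at the simplex vertices together with points having several nonzero coordinates (the paper uses a full-support interior point, you use midpoints $\tfrac{1}{2}(e_i+e_j)$ --- an immaterial difference), and then pin down the constants via summation over $y$, row-stochasticity, and the full column rank of $\tilde{B}$. The only cosmetic deviation is the order of the final steps: the paper deduces $P=\tilde{P}$ before showing the scalars equal one, while you establish $c_y=1$ first; both orderings go through.
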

\thrmref{thrm:Bayesian_filter_unique} guarantees that the HMM filter update
\eref{eq:hmm_filter} is unique in the sense that two HMMs with different transition and/or
observation matrices cannot generate exactly the same posterior updates. In turn, this
leads us to expect that \pref{pr:inverse_filter_discrete} is well-posed; we should be able
to reconstruct both $P$ and $B$ uniquely once the conditions of
\thrmref{thrm:Bayesian_filter_unique} are fulfilled. 
    
Note, however, that \thrmref{thrm:Bayesian_filter_unique} does not assert that the HMM
filter is identifiable from a sample path of posteriors -- the sample path could be finite
and/or only visit a subset of the probability simplex. In particular, in applying
\thrmref{thrm:Bayesian_filter_unique} for \pref{pr:inverse_filter_discrete}, we would need
to guarantee that updates from every point on the simplex have been observed -- this seems
unnecessarily strong. 

The following theorem is a generalization of \thrmref{thrm:Bayesian_filter_unique} and is
the main identifiability result of this paper. 
\begin{theorem}
    Suppose that two HMMs $P$, $B$ and $\tilde{P}$, $\tilde{B}$ both satisfy Assumptions
    \ref{as:P_B_positive} and \ref{as:P_B_full_column}. Let, for each $y = 1, \dots, Y$, $\Delta_y = \{\pi_1^y, \dots,
    \pi_X^y, \pi_{X+1}^y\} \subset \Delta$ be a set of $X+1$ posteriors such that
    $\pi_1^y, \dots, \pi_X^y \in \Rb^X$ are linearly independent and the last posterior
    $\pi_{X+1}^y \in \Rb^X$ can be written
    \begin{equation}
        \pi_{X+1}^y = [\beta]_1 \pi_1^y + \dots + [\beta]_X \pi_X^y,
    \end{equation}
    with $\beta \in \Rb^X$ and $[\beta]_i \neq 0$ for each $i$. Then, for each 
    $y = 1, \dots, Y$,
    \begin{equation}
        T(\pi, y; P, B) = T(\pi, y; \tilde{P}, \tilde{B}), \qquad \forall \pi \in
        \Delta_y,
        \label{eq:T_equal_T_tilde}
    \end{equation}
    if and only if $P = \tilde{P}$ and $B = \tilde{B}$.

    \label{thrm:Bayesian_filter_unique_subset}
\end{theorem}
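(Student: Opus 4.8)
The plan is to fix an observation $y$ and reduce the (nonlinear, normalized) filter update to a single linear map. Writing $M_y = \diag{b_y} P^T$ and $\tilde{M}_y = \diag{\tilde{b}_y} \tilde{P}^T$, the update becomes $T(\pi, y; P, B) = M_y \pi / (\ones^T M_y \pi)$, and analogously for the tilde system. Under Assumptions~\ref{as:P_B_positive} and \ref{as:P_B_full_column} each $M_y$ is invertible: $\diag{b_y}$ is invertible because $b_y$ is strictly positive, and $P^T$ is invertible because $P$ is square and full column rank. The hypothesis \eref{eq:T_equal_T_tilde} states that the two normalized maps agree at the $X+1$ posteriors of $\Delta_y$. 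Since equality of $v/(\ones^T v)$ and $w/(\ones^T w)$ forces $w$ to be a positive scalar multiple of $v$, I first obtain positive scalars $\lambda_1, \dots, \lambda_{X+1}$ with $\tilde{M}_y \pi_i^y = \lambda_i M_y \pi_i^y$ for $i = 1, \dots, X+1$.

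The crux of the argument -- and the step that genuinely uses the extra, affinely dependent posterior -- is to show that these scalars all coincide, i.e.\ that $\tilde{M}_y = \lambda_y M_y$ for a single $\lambda_y > 0$. Expanding the last posterior as $\pi_{X+1}^y = \sum_{i=1}^{X} [\beta]_i \pi_i^y$ and evaluating $\tilde{M}_y \pi_{X+1}^y$ in two ways (directly, and via the relation $\tilde{M}_y \pi_{X+1}^y = \lambda_{X+1} M_y \pi_{X+1}^y$) gives
\[
    \sum_{i=1}^{X} [\beta]_i (\lambda_i - \lambda_{X+1}) \, M_y \pi_i^y = 0.
\]
Because $M_y$ is invertible and $\pi_1^y, \dots, \pi_X^y$ are linearly independent, the vectors $M_y \pi_1^y, \dots, M_y \pi_X^y$ are linearly independent, so every coefficient must vanish; since $[\beta]_i \neq 0$ for all $i$, this forces $\lambda_1 = \dots = \lambda_X = \lambda_{X+1} =: \lambda_y$. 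Thus $\tilde{M}_y \pi_i^y = \lambda_y M_y \pi_i^y$ holds on the basis $\{\pi_i^y\}_{i=1}^{X}$ of $\Rb^X$, whence $\tilde{M}_y = \lambda_y M_y$.

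It then remains to promote this scalar-multiple relation to full equality of the HMMs. The cleanest route is to observe that $\tilde{M}_y = \lambda_y M_y$ makes the two normalized updates agree not merely on $\Delta_y$ but on the \emph{entire} simplex, since the scalar cancels in the normalization: $\tilde{M}_y \pi / (\ones^T \tilde{M}_y \pi) = M_y \pi / (\ones^T M_y \pi)$ for all $\pi \in \Delta$. Carrying this out for every $y = 1, \dots, Y$ places us exactly in the hypothesis of \thrmref{thrm:Bayesian_filter_unique}, which immediately yields $P = \tilde{P}$ and $B = \tilde{B}$. Alternatively, one can finish by hand: summing $\diag{\tilde{b}_y}\tilde{P}^T = \lambda_y \diag{b_y} P^T$ over $y$ and using $\sum_y b_y = \sum_y \tilde{b}_y = \ones$ gives $\tilde{P} = P \diag{d}$ with $d = \sum_y \lambda_y b_y$; row-stochasticity of $\tilde{P}$ and invertibility of $P$ then force $d = \ones$ and $\tilde{P} = P$, after which $\diag{\tilde{b}_y} = \lambda_y \diag{b_y}$ gives $\tilde{b}_y = \lambda_y b_y$, and the unit row sums of $B$ and $\tilde{B}$ together with the full column rank of $B$ force $\lambda_y = 1$, i.e.\ $\tilde{B} = B$.

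The converse implication is immediate. The main obstacle is the middle step: without the $(X+1)$-th posterior, and without the requirement that \emph{all} its coordinates $[\beta]_i$ be nonzero, agreement on $X$ linearly independent points only yields $X$ possibly distinct scalars $\lambda_i$ -- strictly weaker than $\tilde{M}_y = \lambda_y M_y$ and insufficient to pin down $P$ and $B$. The whole point of the augmented, affinely dependent set $\Delta_y$ is precisely to collapse these scalars into one, thereby reducing \thrmref{thrm:Bayesian_filter_unique_subset} to \thrmref{thrm:Bayesian_filter_unique}.
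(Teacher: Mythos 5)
Your proof is correct and takes essentially the same approach as the paper: your key step---using the affinely dependent $(X+1)$-th posterior with all-nonzero coefficients $[\beta]_i$ to collapse the scalars $\lambda_1, \dots, \lambda_{X+1}$ into one, hence $\diag{\tilde{b}_y}\tilde{P}^T = \lambda_y \diag{b_y}P^T$---is exactly the content of the paper's auxiliary \lref{eq:eigenvalue_constant_on_simplex_general}, and your ``by hand'' finish coincides with the paper's subsequent algebra (summing over $y$ and using stochasticity and full column rank). The only cosmetic difference is that your primary finishing route invokes \thrmref{thrm:Bayesian_filter_unique} as a black box after noting the scalar cancels in the normalization, whereas the paper re-runs the identical algebra from the middle of that theorem's proof.
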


\thrmref{thrm:Bayesian_filter_unique_subset} relaxes \thrmref{thrm:Bayesian_filter_unique}
in the sense that, for each observation $y = 1, \dots, Y$, we only need a sequence of $X
+ 1$ posteriors satisfying the conditions for the set $\Delta_y$. To make the theorem more
concrete, for $X = 3$, the conditions mean that all posteriors we have obtained cannot lie
on the lines connecting any two posteriors -- see \fref{fig:condition_for_identifiability} for
an illustration.

\begin{figure}[t!]
    \centering

    \includegraphics{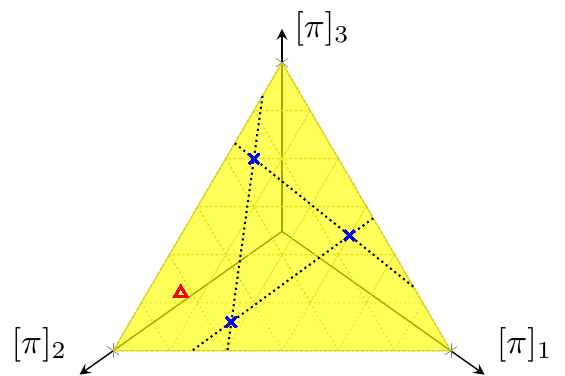}

    \caption{Illustration of the set $\Delta_y$ in
        \thrmref{thrm:Bayesian_filter_unique_subset} (for $X = 3$). The blue crosses
        correspond to $\pi_1^y, \dots, \pi_X^y$ and the red triangle to $\pi_{X+1}^y$. For
        the conditions of the theorem to be fulfilled, the point $\pi_{X+1}^y$ cannot lie
        on the dashed lines.}

    \label{fig:condition_for_identifiability}

\end{figure}

\subsection{Direct Approach to the Inverse Filtering Problem} 

At a first glance, \pref{pr:inverse_filter_discrete} appears computationally intractable:
there are combinatorial elements (due to the unknown sequence of observations) and
non-convexity from the products between columns $b_y$ of the observation matrix and the
transition matrix $P$ in the HMM filter \eref{eq:hmm_filter}. 

In order to reconstruct parameters that are consistent with the data (i.e., that satisfy the
filter equation \eref{eq:hmm_filter} and fulfill the non-negativity and sum-to-one
constraints imposed by probabilities), a direct approach is to solve the
following feasibility problem:
\begin{align}
    \min_{\{y_k\}_{k=1}^N, \, \{b_y\}_{y=1}^Y, \, P} &\quad 
    \sum_{k=1}^N \left\|\pi_{k} - \frac{\diag{b_{y_k}} P^T \pi_{k-1}}{\ones^T \diag{b_{y_k}} P^T \pi_{k-1}}\right\| \notag \\
    \text{s.t.} \quad &\quad y_k \in \{1, \dots, Y\}, \quad\quad \text{ for } k = 1, \dots, N, \notag \\
&\quad b_y \geq 0, \quad\quad\quad\quad\quad\quad \text{ for } y = 1, \dots, Y, \notag \\
&\quad [b_1 \dots b_Y] \ones = \ones, \notag \\
&\quad P \ones = \ones, \quad P \geq 0,
    \label{eq:inverse_filter_naive}
\end{align}
where the choice of norm is arbitrary since the cost is zero for any feasible set of
parameters.

% (in this case, the $\| \cdot \|_\infty$-norm) 

The problem \eref{eq:inverse_filter_naive} is combinatorial (in $\{y_k\}_{k=1}^N\}$) and
non-convex (in $\{b_y\}_{y=1}^Y$ and $P$) -- in other words, it is challenging and
computationally costly to obtain a solution. In the next section, we will propose
an indirect approach that results in a computationally feasible solution to
\pref{pr:inverse_filter_discrete}.

\section{Inverse Filtering by Exploiting \\ the Structure of the HMM Filter}
\label{sec:proposed method}

In this section, we first derive an alternative characterization of the HMM filter
\eref{eq:hmm_filter}. The properties of this characterization allow us to formulate an
alternative solution to \pref{pr:inverse_filter_discrete}. This solution still requires
solving a combinatorial problem (a \emph{nullspace clustering problem}, see
\pref{pr:nullspace_clustering} below) and is, essentially, equivalent to
\eref{eq:inverse_filter_naive}. However, by leveraging insights from its geometrical
interpretation, we derive a computationally feasible convex relaxation based on structured
sparsity regularization (the fused group LASSO \cite{yuan_model_2006}) that permits us to obtain a solution in
a computationally feasible manner.

\subsection{Alternative Characterization of the HMM Filter}

Our first result is a variation of the key result derived in \cite{mattila_inverse_2017}.
First note that the HMM filter \eref{eq:hmm_filter} can be rewritten as 
\begin{equation}
    (\ref{eq:hmm_filter}) \iff b_{y_{k}}^T P^T \pi_{k-1} \pi_{k} = \diag{b_{y_{k}}} P^T
    \pi_{k-1},
    \label{eq:hmm_filter_no_frac}
\end{equation}
by simply multiplying by the denominator (which is allowed under \asref{as:P_B_positive}).
By restructuring\footnote{Detailed algebraic manipulations can be found in the appendix.}
\eref{eq:hmm_filter_no_frac}, we obtain an alternative characterization of the HMM filter
\eref{eq:hmm_filter}:
\begin{theorem}
    Under Assumptions \ref{as:P_B_positive} and \ref{as:P_B_full_column}, the HMM
    filter-update \eref{eq:hmm_filter} can be equivalently written as
    \begin{equation}
        (\pi_{k-1}^T \otimes [\pi_k \ones^T - I]) \vec{\diag{b_{y_k}} P^T} = 0,
        \label{eq:hmm_filter_alternative}
    \end{equation}
    for $k = 1, \dots, N$.
    \label{thrm:hmm_filter_nullspace_PB}
\end{theorem}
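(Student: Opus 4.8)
The plan is to begin from the fraction-free form \eref{eq:hmm_filter_no_frac}, which the preceding discussion has already established to be equivalent to the filter update \eref{eq:hmm_filter} under \asref{as:P_B_positive}, and to massage it into a single homogeneous linear equation in the vectorized matrix $\vec{\diag{b_{y_k}} P^T}$. Writing $M \eqdef \diag{b_{y_k}} P^T$ for brevity, the first observation I would record is that the scalar on the left of \eref{eq:hmm_filter_no_frac}, namely $b_{y_k}^T P^T \pi_{k-1}$, coincides with the normalizing denominator $\ones^T M \pi_{k-1}$, since $\ones^T \diag{b_{y_k}} = b_{y_k}^T$. Thus \eref{eq:hmm_filter_no_frac} reads $(\ones^T M \pi_{k-1})\, \pi_k = M \pi_{k-1}$, and the entire right-hand side is now expressed through the single matrix $M$.

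Next I would rearrange this identity into homogeneous form. Because $\ones^T M \pi_{k-1}$ is a scalar, it commutes with $\pi_k$ and may be absorbed as $\pi_k \ones^T M \pi_{k-1}$; subtracting $M \pi_{k-1}$ and factoring out $M\pi_{k-1}$ on the right yields $(\pi_k \ones^T - I)\, M \pi_{k-1} = 0$, using $\pi_k \ones^T M - M = (\pi_k \ones^T - I) M$. This is the crux: the normalization has been turned into the matrix $\pi_k \ones^T - I$ acting on the left.

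The final step is to vectorize. Identifying $A = (\pi_k \ones^T - I)$, $Z = M$, and $B = \pi_{k-1}$ in the standard Kronecker identity $\vec{A Z B} = (B^T \otimes A)\, \vec{Z}$, and noting that the left-hand side is already a column vector (so applying $\vec{\cdot}$ to it leaves it unchanged), I obtain $(\pi_{k-1}^T \otimes (\pi_k \ones^T - I))\, \vec{M} = 0$, which is precisely \eref{eq:hmm_filter_alternative}. Since \eref{eq:hmm_filter_no_frac} is equivalent to \eref{eq:hmm_filter} and every manipulation above is a reversible algebraic equality (the $\vec{\cdot}$ map being a bijection), the two characterizations are equivalent, which completes the proof.

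The manipulations are essentially routine, so the only real care needed lies in the bookkeeping of the Kronecker identity: one must choose the grouping with $(\pi_k \ones^T - I)$ as the left factor and $\pi_{k-1}$ as the right factor, rather than attempting to vectorize while the scalar denominator is still present, and one must first recognize that this denominator can be rewritten through $M$ so that a single object $\vec{\diag{b_{y_k}} P^T}$ emerges. I would also note that \asref{as:P_B_full_column} plays no role in the equivalence itself; it is \asref{as:P_B_positive} that ensures the denominator is nonzero so that clearing it is legitimate.
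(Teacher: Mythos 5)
Your proposal is correct and follows essentially the same route as the paper's own proof: clearing the denominator (valid under \asref{as:P_B_positive}), rearranging into the homogeneous form $\left(\pi_k \ones^T - I\right) \diag{b_{y_k}} P^T \pi_{k-1} = 0$, and then applying the Kronecker vectorization identity with exactly the grouping you describe. Your closing remark that \asref{as:P_B_full_column} is not needed for this equivalence is also consistent with the paper's argument, which likewise only invokes positivity to justify clearing the denominator.
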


To see why the reformulation \eref{eq:hmm_filter_alternative} is useful, recall that in
\pref{pr:inverse_filter_discrete}, we aim to estimate the transition matrix $P$, the
observation matrix $B$ and the observations $y_k$ given posteriors $\pi_k$. Hence, the coefficient matrix
$(\pi_{k-1}^T \otimes [\pi_k \ones^T - I])$ on the left-hand side of
\eref{eq:hmm_filter_alternative} is known to us, and all that we aim to estimate is
contained in its nullspace. 

\subsection{Reconstructing $P$ and $B$ from Nullspaces}
\label{sec:normalizing_into_P_and_B}

It is apparent from \eref{eq:hmm_filter_alternative} that everything we seek to estimate
(i.e., the transition matrix $P$, the observation matrix $B$ and the observations) is
accommodated in a vector that lies in the nullspace of a known coefficient matrix.
Even so, it is not obvious that the sought quantities can be reconstructed from this. In
particular, since a nullspace is only determined up to scalings of its basis vectors, by
leveraging \eref{eq:hmm_filter_alternative} we can at most hope to reconstruct the
\emph{directions} of vectors $\vec{\diag{b_{y}} P^T}$:
\begin{equation}
    \alpha_y \vec{\diag{b_y} P^T} \quad  \in \Rb^{X^2},
    \label{eq:scaled_nullspace_vectors}
\end{equation}
for $y = 1, \dots, Y$, where $\alpha_y \in \Rb_{> 0}$ correspond to scale factors.

Can a set of vectors \eref{eq:scaled_nullspace_vectors} be factorized into $P$ and $B$,
and do the undetermined scale factors $\alpha_y$ (which, again, are due to the nullspace basis
only being determined up to scaling) pose a problem? Our next theorem shows
that it can be done. First, however, note that by reshaping
\eref{eq:scaled_nullspace_vectors}, we equivalently have access to matrices $a_y
\diag{b_y} P^T \in \Rb^{X \times X}$ for $y = 1, \dots, Y$.
\begin{theorem}
    Given are matrices $V_y \eqdef \alpha_y \diag{b_y} P^T$ for $y = 1, \dots, Y$, where $P$
    and $B = \begin{bmatrix}b_1 & \dots & b_Y \end{bmatrix}$ are HMM parameters
    satisfying Assumptions \ref{as:P_B_positive} and \ref{as:P_B_full_column}, and
    $\alpha_y$ are strictly positive (unknown) scalars.  Let 
    %\begin{equation}
        $
        \Bysum \eqdef \sum_{y=1}^Y V_y^T,
        $
    %\end{equation}
    then the transition matrix $P$ can be reconstructed as 
    \begin{equation}
        P = \Bysum \diag{\Bysum^{-1} \ones}.
        \label{eq:factorize_to_P}
    \end{equation}
    Subsequently, let $\bar{B} \eqdef \begin{bmatrix} V_1 P^{-T}\ones & \dots & V_Y
    P^{-T}\ones\end{bmatrix}$, then the observation matrix can be reconstructed as
    \begin{equation}
        B = \bar{B} \diag{\bar{B}^\dagger \ones}.
        \label{eq:factorize_to_B}
    \end{equation}
    \label{thrm:reconstruct_P_and_B}
\end{theorem}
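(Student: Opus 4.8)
The plan is to verify both reconstruction formulas by direct computation, with the crucial observation that the row-stochasticity of $P$ and $B$ is exactly what eliminates the unknown positive scalars $\alpha_y$. First I would transpose the data: since each $\diag{b_y}$ is symmetric, $V_y^T = \alpha_y P \diag{b_y}$, and therefore
\[
    \Bysum = \sum_{y=1}^Y V_y^T = P\, \diag{\textstyle\sum_{y=1}^Y \alpha_y b_y} = P\,\diag{w}, \qquad w \eqdef \sum_{y=1}^Y \alpha_y b_y.
\]
By \asref{as:P_B_positive} and $\alpha_y > 0$ the vector $w$ is strictly positive, so $\diag{w}$ is invertible; since $P$ is square and full column rank (\asref{as:P_B_full_column}) it too is invertible, and hence so is $\Bysum$.

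For the transition matrix, the key identity is $P\ones = \ones$, which gives $P^{-1}\ones = \ones$. Thus $\Bysum^{-1}\ones = \diag{w}^{-1}P^{-1}\ones = \diag{w}^{-1}\ones$, so that $\diag{\Bysum^{-1}\ones} = \diag{w}^{-1}$ and $\Bysum\,\diag{\Bysum^{-1}\ones} = P\,\diag{w}\,\diag{w}^{-1} = P$, which is \eref{eq:factorize_to_P}. The point is that the normalization $\diag{\Bysum^{-1}\ones}$ cancels exactly the unknown weights $w$ that carry the scalars $\alpha_y$.

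For the observation matrix, I would first simplify the columns of $\bar{B}$ using $P^T P^{-T} = I$:
\[
    V_y P^{-T}\ones = \alpha_y \diag{b_y}\, P^T P^{-T}\ones = \alpha_y \diag{b_y}\ones = \alpha_y b_y,
\]
so that $\bar{B} = B\,\diag{\alpha}$ with $\alpha = (\alpha_1,\dots,\alpha_Y)^T$. Because $B$ is tall with full column rank and $\diag{\alpha}$ is invertible, $\bar{B}$ is full column rank and its pseudo-inverse factors as $\bar{B}^\dagger = (\bar{B}^T\bar{B})^{-1}\bar{B}^T = \diag{\alpha}^{-1}B^\dagger$. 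The argument then mirrors the transition case: row-stochasticity gives $B\ones = \ones$ (with ones vectors of the appropriate dimensions), and since $B^\dagger B = I$ for full-column-rank $B$, we get $B^\dagger\ones = B^\dagger B\ones = \ones$. Hence $\bar{B}^\dagger\ones = \diag{\alpha}^{-1}\ones$ and $\bar{B}\,\diag{\bar{B}^\dagger\ones} = B\,\diag{\alpha}\,\diag{\alpha}^{-1} = B$, establishing \eref{eq:factorize_to_B}. I expect the main obstacle to be the pseudo-inverse bookkeeping: one must use that $B$ (and hence $\bar{B}$) is tall with full column rank to write $\bar{B}^\dagger = (\bar{B}^T\bar{B})^{-1}\bar{B}^T$ and to invoke $B^\dagger B = I$ -- the identity $BB^\dagger = I$ would be false when $Y < X$. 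Everything else collapses to the two row-stochasticity facts $P^{-1}\ones = \ones$ and $B^\dagger\ones = \ones$.
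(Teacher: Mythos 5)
Your proof is correct and takes essentially the same route as the paper's: the reconstruction of $P$ is verified identically (write $\Bysum = P \diag{\sum_{y} \alpha_y b_y}$, use invertibility from Assumptions~\ref{as:P_B_positive}--\ref{as:P_B_full_column} and the fact $P^{-1}\ones = \ones$), and the reconstruction of $B$ rests on the same factorization $\bar{B} = B \diag{[\alpha_1 \; \dots \; \alpha_Y]^T}$. The only difference is cosmetic, in the finishing step for $B$: the paper argues that the consistent system $\ones = \bar{B} d$ has a unique solution $d = \bar{B}^\dagger \ones$ because $\bar{B}$ has full column rank, whereas you compute $\bar{B}^\dagger = \diag{[\alpha_1 \; \dots \; \alpha_Y]^T}^{-1} B^\dagger$ explicitly and use $B^\dagger \ones = \ones$ --- both hinge on exactly the same rank and row-stochasticity facts.
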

The proof of \thrmref{thrm:reconstruct_P_and_B} amounts to algebraically verifying that
the relations \eref{eq:factorize_to_P} and \eref{eq:factorize_to_B} hold by employing
properties of row-stochastic matrices. The last factor in each equation can be interpreted
as a sum-to-one normalization.

\subsection{How to Compute the Nullspaces?} % Vectors $\vec{\diag{b_y} P^T}$?}
\label{sec:how_to_compute_nullspaces}

\thrmref{thrm:reconstruct_P_and_B} gives us a procedure to reconstruct the transition and
observation matrices from vectors \eref{eq:scaled_nullspace_vectors} -- i.e., vectors
parallel to $\vec{\diag{b_y} P^T}$ for $y = 1, \dots, Y$. The goal in the remaining part
of this section is to compute such vectors from the known coefficient matrices in
\eref{eq:hmm_filter_alternative}.

If the nullspace of the coefficient matrix of \eref{eq:hmm_filter_alternative} was
one-dimensional, we could proceed as in \cite{mattila_inverse_2017}: Since there are only
a finite number of values, namely $Y$, that the $y_k$:s can take, there are only a finite
number of directions in which the nullspaces can point; along the vectors $\vec{\diag{b_y} P^T}$ for
$y = 1, \dots, Y$. Hence, once a coefficient matrix corresponding to each observation $y$ has been obtained,
these directions can be reconstructed. 

Unfortunately, the nullspace of the coefficient matrix of \eref{eq:hmm_filter_alternative}
is \emph{not} one-dimensional:
\begin{lemma}
    Under Assumptions \ref{as:P_B_positive} and \ref{as:P_B_full_column}, we have that
    \begin{equation}
        \rank{\pi_{k-1}^T \otimes [\pi_k \ones^T - I]} = X - 1.
    \end{equation}
    \label{lem:dim_of_nullspace}
\end{lemma}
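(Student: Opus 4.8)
The plan is to exploit the Kronecker-product structure of the coefficient matrix and reduce the rank computation to that of its two factors. Recall the standard identity $\rank{A \otimes B} = \rank{A}\,\rank{B}$. Applying it here with $A = \pi_{k-1}^T$ (a $1 \times X$ row vector) and $B = \pi_k \ones^T - I$ (an $X \times X$ matrix), the claim reduces to showing that $\rank{\pi_{k-1}^T} \cdot \rank{\pi_k \ones^T - I} = X - 1$.

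The first factor is immediate: since $\pi_{k-1} \in \Delta$ is a probability vector, it is nonzero, so $\rank{\pi_{k-1}^T} = 1$. (Assumption~\ref{as:P_B_positive} guarantees that the filter denominator in \eref{eq:hmm_filter} never vanishes, so every posterior is a well-defined point of $\Delta$.) It therefore remains to prove that $\rank{\pi_k \ones^T - I} = X - 1$, equivalently that the matrix $M \eqdef \pi_k \ones^T - I$ has a one-dimensional nullspace. This is the crux of the argument.

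To characterize $\ker M$, observe that for $v \in \Rb^X$, the equation $Mv = 0$ is equivalent to $(\ones^T v)\,\pi_k = v$. Writing $c \eqdef \ones^T v$, any solution must be of the form $v = c\,\pi_k$; conversely, since $\ones^T \pi_k = 1$ (as $\pi_k \in \Delta$), every such $v = c\,\pi_k$ indeed satisfies $(\ones^T v)\,\pi_k = c\,\pi_k = v$. Hence $\ker M = \vspan{\pi_k}$, which is one-dimensional because $\pi_k \neq 0$, giving $\rank{M} = X - \dim \ker M = X - 1$. An alternative route notes that $\pi_k \ones^T$ is a rank-one matrix whose only nonzero eigenvalue is $\ones^T \pi_k = 1$, and which is diagonalizable since $\ones^T \pi_k \neq 0$; subtracting $I$ then shifts the spectrum to $\{0, -1, \dots, -1\}$, leaving exactly one zero eigenvalue.

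Combining the two factors yields $\rank{\pi_{k-1}^T \otimes [\pi_k \ones^T - I]} = 1 \cdot (X-1) = X-1$, as claimed. The only subtle point — and the step I expect to require the most care — is the nullspace computation for $M$; everything else follows from the rank multiplicativity of the Kronecker product and the fact that the posteriors are genuine probability vectors lying in $\Delta$.
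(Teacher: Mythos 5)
Your proof is correct and follows essentially the same route as the paper: rank multiplicativity of the Kronecker product reduces the claim to showing $\rank{\pi_k \ones^T - I} = X-1$. If anything, your explicit nullspace computation (showing $\ker{(\pi_k \ones^T - I)} = \vspan{\pi_k}$, which hinges on $\ones^T \pi_k = 1$) is more careful than the paper's one-line justification that the matrix is ``a rank-1 perturbation to the identity matrix,'' since that fact alone does not pin down the rank exactly.
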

Since $\vec{\diag{b_{y}} P^T} \in \Rb^{X^2}$, the null-space is, in fact, $X^2 - (X - 1)$
dimensional. Below, we demonstrate how, by intersecting multiple nullspaces, we can obtain
a one-dimensional subspace (a vector) that is parallel to the vector $\vec{\diag{b_{y}}
P^T}$ that we seek.

\begin{remark}

    The above is not surprising in the light of that every update \eref{eq:hmm_filter}
    of the posterior corresponds to $X$ equations.\footnote{Actually, only $X-1$ equations
    since the sum-to-one property of the posterior makes one equation superfluous.} In
    \cite{mattila_inverse_2017}, only the $X$ parameters of $\diag{b_y}$ had to be
    reconstructed at each time instant since $P$ was assumed known. Now, instead, we aim to
    reconstruct the $X^2$ parameters of $\diag{b_y} P^T$, which cannot be done with just
    one update (i.e., with $X$ equations). Hence, we will need to employ the equations
    from several updates.
    
    \label{rem:number_of_equations}

\end{remark}

\subsection{Special Case: Known Sequence of Observations}
\label{sec:special_case_known_obs}

To make the workings of our proposed method more transparent, suppose for the moment that
we have access to the sequence of observations $y_1, \dots, y_N$ that were processed by
the filter \eref{eq:hmm_filter}. By \thrmref{thrm:hmm_filter_nullspace_PB}, we know that
the vector $\vec{\diag{b_{y_k}} P^T}$ lies in the nullspace of the coefficient matrix
$(\pi_{k-1}^T \otimes [\pi_k \ones^T - I])$ for all $k = 1, \dots, N$. If we consider only
the time instants when a certain observation, say $y$, was processed, then the vector
$\vec{\diag{b_{y}} P^T}$ lies in the nullspace of all the corresponding coefficient
matrices:
\begin{equation}
    \vec{\diag{b_{y}} P^T} \in 
    \bigcap\limits_{k :\, y_k = y} \ker{(\pi_{k-1}^T \otimes [\pi_k \ones^T - I])},
    \label{eq:vector_in_nullspace}
\end{equation}
for $y = 1, \dots, Y$. Now, if the intersection on the right-hand side of
\eref{eq:vector_in_nullspace} is one-dimensional, this gives us a way to reconstruct
the direction of $\vec{\diag{b_y} P^T}$ -- in this case,
\begin{equation}
    \vspan{\vec{\diag{b_{y}} P^T}} =
    \bigcap\limits_{k :\, y_k = y} \ker{(\pi_{k-1}^T \otimes [\pi_k \ones^T - I])},
    \label{eq:intersecting_nullspaces_known_y}
\end{equation}
and we simply compute the one-dimensional intersection. Recall that the next step would
then be to factorize these directions into the products $P$ and $B$ via
\thrmref{thrm:reconstruct_P_and_B}.

The identifiability result of \thrmref{thrm:Bayesian_filter_unique_subset} tells us that
this happens when there exists a subsequence $\Delta_y \subset \{ \pi_k \}_{k : y_k = y}$, that
satisfies the criteria in \thrmref{thrm:Bayesian_filter_unique_subset}.  

\begin{remark} 

    In practice, roughly $X$ updates, for each $y$, should be enough (see
    Remark~\ref{rem:number_of_equations}). A pessimistic estimate of how many samples will
    be required for this can be obtain via similar reasoning as in \cite[Lemma
    3]{mattila_inverse_2017}: if $B \geq \beta > 0$ elementwise, then on the order of
    $\beta^{-1} YX^2$ samples are expected to suffice. 

    \label{rem:expected_samples}

\end{remark}

\subsection{Inverse Filtering via Nullspace Clustering}
\label{sec:general_solution_nullspace}

\pref{pr:inverse_filter_discrete} is complicated by the fact that in the inverse filtering
problem, we do \emph{not} have access to the sequence of observations -- we only observe
a sequence of posteriors. Thus, we do not know at what time instants a certain observation
$y$ was processed and, hence, which nullspaces to intersect, as in
\eref{eq:intersecting_nullspaces_known_y}, to obtain a vector parallel to $\vec{\diag{b_y}
P^T}$.

An abstract version of the problem can be posed as follows:
\begin{problem}[Nullspace Clustering]

    Given is a set of matrices $\{A_k\}_{k=1}^N$ that can be divided into $Y$ subsets
    (clustered) such that the intersection of the nullspaces of the matrices in each
    subset is one-dimensional. That is, there are numbers $\{y_k\}_{k=1}^N$ with $y_k \in
    \{1, \dots, Y\}$ such that,
    \begin{equation}
        \bigcap\limits_{k :\, y_k = y} \ker{A_k} = \vspan{v_y},
    \end{equation}
    for some vector $v_y$ with $y = 1, \dots, Y$. Find the vectors $\{v_y\}_{y=1}^Y$ that
    span the intersections.

    \label{pr:nullspace_clustering}
\end{problem}
We provide a graphical illustration of the nullspace clustering problem in
\fref{fig:nullspace_clustering}. Note that, in our instantiation of the problem, $A_k
= (\pi_{k-1}^T \otimes [\pi_k \ones^T - I])$ is the coefficient matrix of the HMM filter
\eref{eq:hmm_filter_alternative}, and each vector $v_y = \vec{\diag{b_y} P^T}$ is what we
aim to reconstruct.

\begin{figure}[t!]
    \centering

    \includegraphics{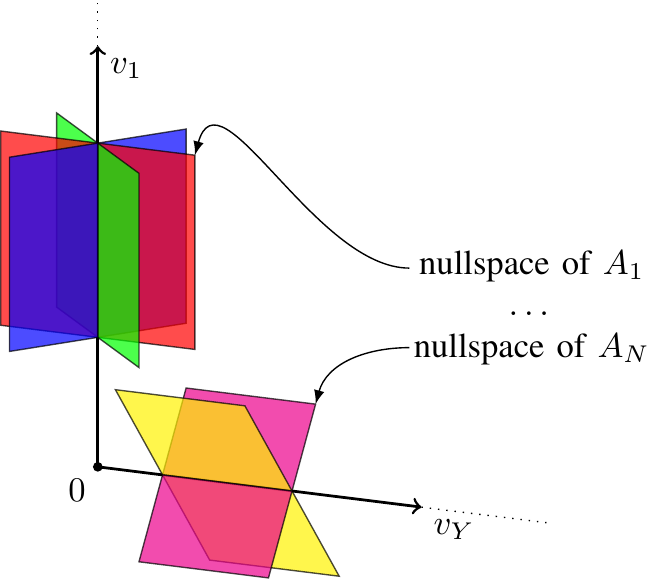}

\caption{We are given a set of subspaces parametrized by the nullspaces of matrices $A_1,
    \dots, A_N$. These subspaces have the property that each subspace contains one of the
    $Y$ vectors $v_1, \dots, v_Y$. In the nullspace clustering problem
    (\pref{pr:nullspace_clustering}), the aim is to find the (directions of) vectors
    $v_1, \dots, v_Y$.}

    \label{fig:nullspace_clustering}

\end{figure}

The problem was simplified in the previous section, since by knowing the observations
$y_1, \dots, y_N$, we know which vector each subspace is generated about (i.e., the
subset assignments) and can simply intersect the subspaces in each subset to obtain the
vectors $v_1, \dots, v_Y$. By not having direct access to the sequence of observations,
the problem becomes combinatorial; which nullspaces should be intersected? Albeit
a solution can be obtained via mixed-integer optimization in much the same fashion as in
\eref{eq:inverse_filter_naive} -- since \pref{pr:nullspace_clustering} is merely
a reformulation of the original problem -- such an approach can be highly computationally
demanding.

We propose instead the following two-step procedure that consists of, first, a convex
relaxation, and second, a refinement step using local heuristics. We emphasize that if the
two steps succeed, then there is \emph{nothing approximate} about the solution we obtain
-- the directions of the vectors $v_y$ are obtained exactly.

\subsubsection*{Step 1. Convex Relaxation} Compute a solution to the convex problem

\begin{align}
    \min_{\{w_k\}_{k=1}^N} &\quad \sum_{i=1}^N \sum_{j=1}^N \| w_i - w_j \|_\infty \notag \\
    \text{s.t.} \quad &\quad A_k w_k = 0, \quad \text{ for } k = 1, \dots, N, \notag \\
                      &\quad w_k \geq 1, \hspace{0.8cm} \text{ for } k = 1, \dots, N,
    \label{eq:nullspace_cluster}
\end{align}
which aims to find $N$ vectors that each lie in the nullspace of the corresponding matrix
$A_k$, and that, by the objective function, are promoted to coincide via a fused group
LASSO \cite{yuan_model_2006}. That is, the set $\{w_k\}_{k=1}^N$ is \emph{sparse} in the
number of unique vectors. This is a relaxation because we have dropped the hard constraint
of there only being exactly $Y$ different vectors. The constraint $w_k \geq 1$ assures
that we avoid the trivial case $w_k = 0$ for all $k$.\footnote{In order to relax
    \asref{as:P_B_positive}, this can be replaced by $\ones^T w_k \geq 1$ since some
components of the nullspace might then be zero.} The $\|\cdot\|_\infty$-norm is used for
convenience since problem \eref{eq:nullspace_cluster} can then be reformulated as a linear
program.

\subsubsection*{Step 2. Refinement via Spherical Clustering}

The solution of \eref{eq:nullspace_cluster} does not completely solve our problem for two
reasons: \emph{i)} it is not guaranteed to return precisely $Y$ unique basis vectors, and
\emph{ii)} it does not tell us to which subset the nullspace of each $A_k$ should be
assigned (i.e., we still do not know which nullspaces to intersect). 

In order to address these two points, we perform a local refinement using spherical
k-means clustering \cite{buchta_spherical_2012} on the set of vectors $\{w_k\}_{k=1}^N$
resulting from \eref{eq:nullspace_cluster}. This provides us with a set of $Y$ centroid
vectors, as well as a cluster assignment of each vector $w_k$. We employ the spherical
version of k-means since we seek nullspace basis vectors -- the appropriate distance
measure is angular spread, and not the Euclidean norm employed in standard k-means
clustering. 

Now, the centroid vectors should provide good approximations of the vectors we seek (since
$\{w_k\}_{k=1}^N$ are expected to be spread around the intersections of the nullspaces by
the sparsity promoting objective in \eref{eq:nullspace_cluster}). However, they do not
necessarily lie in any nullspace since their computation is unconstrained. To obtain an
exact solution to the problem, we go through the $w_k$:s assigned to each cluster in order
of distance to the cluster's centroid and intersect the corresponding $A_k$:s' nullspaces
until we obtain a one-dimensional intersection.

It should be underlined that when an intersection that is one-dimensional has been
obtained, the (direction of) the vector $v_y$ has been computed \emph{exactly}. Recall
that in our instantiation of the problem, each vector $v_y = \vec{\diag{b_y} P^T}$, so
that once these have been reconstructed, they can be decomposed into the transition matrix
$P$ and the observation matrix $B$ according to \thrmref{thrm:reconstruct_P_and_B}.

\begin{figure*}[ht!]

    \hspace{0.4cm} \includegraphics{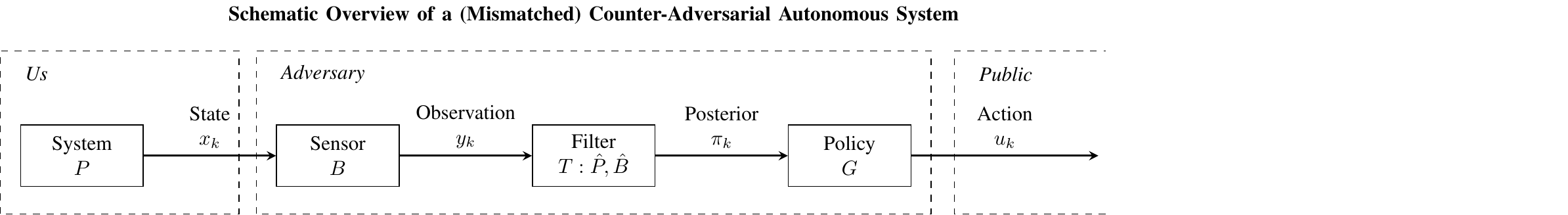}

\caption{Schematic illustration of the setup in
    \eref{eq:game_model1}-\eref{eq:game_model4}. An autonomous adversary measures our
    state $x_k$ as $y_k$ via a sensor $B$.  A mismatched (note that estimates $\hat{P}$
    and $\hat{B}$ are employed) Bayesian filter is used to compute its posterior $\pi_k$
    of our state. Based on this posterior and a policy $G$, a public action $\fact_k$ is
taken. In the remote sensor calibration problem (\pref{pr:remote_calibration}), the aim is
to estimate the adversary's sensor $B$ (which is, in general, different from $\hat{B}$).}

    \label{fig:full_caa_schematic}

\end{figure*}

\begin{remark}

    \thrmref{thrm:reconstruct_P_and_B} assumes that we are given the matrices $V_y$ sorted
    according to the actual labeling of the HMM's observations. If we use the method
    described above, the vectors are only obtained up to permutations of the observation
    labels (this corresponds to the label assigned to each cluster in the spherical
    k-means algorithm). Hence, in practice, we will obtain $B$ up to permutations of its
    columns.

\end{remark}

\subsection{Summary of Proposed Algorithm} 

For convenience, the complete procedure for solving \pref{pr:inverse_filter_discrete} is
summarized in \algoref{alg:full_inverse_filtering}. It should be pointed out that the
algorithm will fail to determine a solution to \pref{pr:inverse_filter_discrete} if it can
not intersect down to a one-dimensional subspace for some $y$. Then, the direction of the
vector $\vec{\diag{b_y} P^T}$ cannot be determined uniquely, and the full set of these
vectors is required in \thrmref{thrm:reconstruct_P_and_B}. This is because this
observation has not been measured enough times, or that the convex relaxation has failed.
As noted in Remark~\ref{rem:number_of_equations}, we expect on the order of $YX^2$
posteriors to be needed for the procedure to succeed -- we explore this further in the
numerical examples in \sref{sec:numerical}.

\begin{remark}

    Once $P$ and $B$ have been reconstructed via \algoref{alg:full_inverse_filtering}, to
    obtain the sequence of observations, simply check which observation $y \in
    \mathcal{Y}$ that maps $\pi_{k-1}$ to $\pi_k$ via the HMM filter \eref{eq:hmm_filter}
    for $k = 1, \dots, N$. This can be done in linear time (in $N$).

    \label{rem:obtaining_observations}
\end{remark}

\begin{algorithm}[b]
\caption{Inverse Filtering (Solution to \pref{pr:inverse_filter_discrete})}
\begin{algorithmic}[1]
  \scriptsize
  \REQUIRE Sequence of posteriors $\{\pi_k\}_{k=1}^N$, dimension $Y$ % $X$ is known from \pi
    \STATE Compute the coefficient matrices $A_k = (\pi_{k-1}^T \otimes [\pi_k \ones^T
        - I])$ of \eref{eq:hmm_filter_alternative} for $k = 1, \dots, N$.
    \STATE Compute a solution $\{w_k\}_{k=1}^N$ to the convex problem
        \eref{eq:nullspace_cluster}.
    \STATE Run spherical k-means clustering for $Y$ clusters on the vectors
    $\{w_k\}_{k=1}^N$
    \FOR{each of the $Y$ clusters}
        \STATE $k$-set = $\{ \}$
        \FOR{each $w_k$ in order of increasing distance to its cluster's centroid}
            % previous, instead of past
            \STATE Compute intersection of current, and past, corresponding $A_k$:s'
            nullspaces: \\ 
            $\quad$ \emph{i)} Add $k$ to $k$-set, \\
            $\quad$\emph{ii)} Compute $\bigcap_{k \in k\text{-set}} \ker{A_k}$

            \IF{intersection is one-dimensional}
                \STATE Save as $v_y$ and proceed to the next cluster.
            \ENDIF
        \ENDFOR
    \ENDFOR
    \STATE Factorize $\{v_y\}_{y=1}^Y$ into $P$ and $B$ using equations
        \eref{eq:factorize_to_P} and \eref{eq:factorize_to_B}, respectively.
        \STATE To obtain the corresponding sequence of observations, see
        Remark~\ref{rem:obtaining_observations}.
\end{algorithmic}
\label{alg:full_inverse_filtering}
\end{algorithm}

\section{Applications of Inverse Filtering to Counter-Adversarial Autonomous Systems}
\label{sec:application_caa}

During the last decade, the importance of defense against cyber-adversarial and autonomous
treats has been highlighted on numerous occasions -- e.g., \cite{kuptel_counter_2017,
barni_coping_2013, farwell_stuxnet_2011}. In this section, we illustrate how the results
in the previous section can be generalized when \emph{i)} the posteriors from the Bayesian
filter are observed in noise, and \emph{ii)} the filtering system is mismatched. The
problem is motived by remotely calibrating (i.e., estimating) the sensors of an autonomous
adversary by observing its actions. 

\subsection{Counter-Adversarial Autonomous Systems}

Consider an adversary that employs an autonomous filtering and control system that
estimates our state and takes actions based on a control policy.  The goal in the design
of a \emph{counter-adversarial autonomous} (CAA) system is to infer information private to
the adversary, and to predict and guard against its future actions
\cite{kuptel_counter_2017, krishnamurthy_how_2019, mattila_what_2020}.

Formally, it can be interpreted as a two-player game in the form of a \emph{partially observed
Markov decision process} (POMDP; \cite{krishnamurthy_partially_2016}), where information
is partitioned between two players: \emph{us} and the \emph{adversary}. 
The model \eref{eq:hmm_state}-\eref{eq:hmm_observation} is now generalized to:
\begin{align}
    \text{us:}\quad\! x_k &\sim P_{x_{k-1}, x} = p(x | x_{k-1}), \, x_0 \sim \pi_0 \label{eq:game_model1} \\
    \text{adversary:}\quad\! y_k &\sim B_{x_k, y} = p(y | x_k), \\
    \text{adversary:}\quad\! \pi_k &= T(\pi_{k-1}, y_k; \hat{P}, \hat{B}), & \label{eq:game_model3} \\
    \text{adversary \& us:}\quad\! \fact_k &\sim \rpolicy_{\pi_k, \fact} = p(\fact | \pi_k), & \label{eq:game_model4}
\end{align}
which should be interpreted as follows. The state $x_k \in \mathcal{X}$, with initial
condition $\pi_0$, is \emph{our} state that we use to probe the adversary. The observation
$y_k \in \mathcal{Y}$ is made by the \emph{adversary}, who subsequently computes its
posterior (in this setting, we refer to it also as a \emph{belief}) $\pi_k$ of our state
using the Bayesian filter $T$ from \eref{eq:bayesian_filter}.\footnote{Again, for
notational simplicity, we assume that the initial prior for the filter is the same as the
initial distribution of the state.} Note that the adversary does \emph{not} have perfect
knowledge of our transition kernel $P$ nor its sensor $B$; it uses estimates $\hat{P}$ and
$\hat{B}$ in \eref{eq:game_model3}.  Finally, the adversary takes an action $\fact_k \in
\factset$, where $\factset$ is an action set, according to a control policy $\rpolicy$
based on its belief. 

A schematic overview is drawn in \fref{fig:full_caa_schematic}, where the dashed boxes
demarcate information between the players (public means both us and the adversary have
access).

\subsection{Remote Calibration of an Adversary's Sensors}

Various questions can be asked that are of importance in the design of a CAA system. The
specific problem we consider is that of remotely calibrating the adversary's sensor:

\begin{problem}[Remote Calibration of Sensors in CAA Systems] 

    Consider the CAA system \eref{eq:game_model1}-\eref{eq:game_model4}. Given knowledge
    of our realized state sequence $x_0, x_1, x_2, \dots$, our transition kernel $P$ and the
    actions taken by the adversary $\fact_1, \fact_2, \dots$; estimate the observation
    likelihoods $B$ of the adversary's sensor.

    \label{pr:remote_calibration}
\end{problem}

A few remarks on the above problem: Our final targets are the likelihoods $B$, \emph{not}
the adversary's own estimate $\hat{B}$. Previous work \cite{mattila_inverse_2017,
krishnamurthy_how_2019, mattila_what_2020} that considered the above problem, or
variations thereof, assumed that the adversary's filter was perfectly matched (i.e.,
$\hat{P} = P$ and $\hat{B} = B$) and, hence, that $P$ was known to the adversary. We
generalize to the more challenging setup of a mismatched filtering system, and will,
hence, have to estimate the adversary's estimate of our transition kernel. Albeit outside
the scope of this paper, with a solution to \pref{pr:remote_calibration} in place,
a natural extension is the input design problem adapted to CAA systems: How to design our
transition kernel $P$ so as to \emph{i)} maximally confuse the adversary, or \emph{ii)}
optimally estimate its sensor?

In order to connect with the results of the previous section, we consider only discrete
CAA systems -- that is, where the state space $\mathcal{X} = \{1, \dots, X\}$ and observation
space $\mathcal{Y} = \{1, \dots, Y\}$ are discrete. Moreover, we assume that the
dimensions $X$ and $Y$ are known to both us and the adversary.

\subsection{Reconstructing Beliefs from Actions}

The feasibility of \pref{pr:remote_calibration} clearly depends on the adversary's policy
-- for example, if the policy is independent of its belief $\pi_k$, we can hardly hope to
estimate anything regarding its sensors. A natural assumption is that the adversary is
rational and that its policy $\rpolicy$ is based on optimizing its expected
cost\cite{machina_choice_1987, mas_microeconomic_1995, luenberger_microeconomic_1995}:
\begin{align}
    \min_{\fact_k} \quad & \mathbb{E}_{x_k}\left\{c(x_k, \fact_k) \, | \, y_1, \dots, y_k\right\} \notag \\
    \text{s.t.}\quad& \fact_k \in \mathcal{C},
    \label{eq:myopic_adversary}
\end{align}
where $c(x,\fact)$ is a cost function that depends on our state and an action $\fact \in
\mathcal{C} \subset \factset$, with $\mathcal{C}$ a constraint set. That is, the adversary
selects an action by minimizing the cost it is expected to receive in the next time
instant.

Recall that the results in \sref{sec:proposed method} reconstruct filter parameters from
posteriors. In order to leverage these results for \pref{pr:remote_calibration}, we first
need to obtain the adversary's posterior distributions from its actions. This is discussed
to a longer extent in \cite{krishnamurthy_how_2019,
mattila_what_2020}, in a Bayesian framework, and in \cite{mattila_estimating_2019}
in an analytic setting. We will use, and briefly recap, the main results of
\cite{mattila_estimating_2019} below. 

However, first of all, even with a structural form such as \eref{eq:myopic_adversary} in
place, the set of potential policies is still infinite. Without any prior assumptions on
the adversary's preferences and constraints, it is impossible to conclusively infer
specifics regarding its posteriors. In \cite{mattila_estimating_2019}, it is assumed that:
\begin{assumption}

    We know the adversary's cost function $c(x, \fact)$ and its constraint set
    $\mathcal{C}$. Moreover, $c(x,\fact)$ is convex and differentiable in $\fact$. 

    \label{as:known_cost_function}

\end{assumption}

Under this assumption, the full set of posteriors that the adversary could have had at any
time instant was characterized in \cite{mattila_estimating_2019} using techniques from
inverse optimization \cite{iyengar_inverse_2005}. Some regularity conditions are needed to
guarantee that a unique posterior can be reconstructed -- in general, several posteriors
could result in the same action, which would complicate our upcoming treatment of
\pref{pr:remote_calibration}. One set of such conditions is the following:
\begin{assumption}
    The adversary's decision is unconstrained in Euclidean space, i.e., $\mathcal{C}
    = \factset = \Rb^\factdim$ for some dimension $\factdim$. The matrix
    \begin{equation}
        F(\fact) =
        \begin{bmatrix}
            \nabla_\fact c(1, \fact) & \dots & \nabla_\fact c(X, \fact) \\
            1 & \dots & 1
        \end{bmatrix}
        \label{eq:F_u_definition}
    \end{equation}
    has full column rank when evaluated at the observed actions $\fact_1, \dots, \fact_N$.
    \label{as:rank_of_cost_function}
\end{assumption}
This assumption says, roughly, that the cost functions defined in
\eref{eq:myopic_adversary} are ``different enough'', and that no information is truncated
via active constraints. The key result is then the following:
\begin{theorem}

    Under Assumptions~\ref{as:known_cost_function} and \ref{as:rank_of_cost_function}, the
    posteriors of an adversary selecting actions according to \eref{eq:myopic_adversary}
    can be uniquely reconstructed by us from its actions as
    \begin{equation}
        \pi_k = F(\fact_k)^\dagger \begin{bmatrix} 0 & \dots & 0 & 1 \end{bmatrix}^T, 
    \end{equation}
    for $k = 1, \dots, N$, where the matrix $F(u)$ is defined in \eref{eq:F_u_definition}
    and the last vector consists of $X$ zeros and a single one.

    \label{thrm:reconstruct_belief_from_action}

\end{theorem}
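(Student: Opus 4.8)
The plan is to read off the first-order optimality conditions of the adversary's decision problem \eref{eq:myopic_adversary} and recognize that, once the normalization of the posterior is appended, they constitute a linear system in $\pi_k$ whose coefficient matrix is precisely $F(\fact_k)$ from \eref{eq:F_u_definition}. The recovery formula then follows by inverting this system with the pseudo-inverse.

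First I would expand the conditional expectation. Because the state $x_k$ is finite-valued with conditional distribution $\pi_k$, the objective in \eref{eq:myopic_adversary} is $\sum_{i=1}^X [\pi_k]_i\, c(i, \fact_k)$. By \asref{as:known_cost_function} each $c(i, \cdot)$ is convex and differentiable, and since $\pi_k \in \Delta$ the objective is a convex combination of these functions, hence itself convex and differentiable in $\fact_k$. Under \asref{as:rank_of_cost_function} the decision is unconstrained ($\mathcal{C} = \factset = \Rb^\factdim$), so the observed optimal action lies in the interior of the feasible set; differentiability then forces the stationarity condition $\sum_{i=1}^X [\pi_k]_i\, \nabla_\fact c(i, \fact_k) = 0$ (convexity additionally guarantees this characterizes the \emph{global} minimizer, so the adversary's chosen action is indeed a stationary point).

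Next I would read this vector identity as a linear equation in the \emph{unknown} $\pi_k$: it states that $\pi_k$ lies in the nullspace of the matrix whose columns are the gradients $\nabla_\fact c(1,\fact_k), \dots, \nabla_\fact c(X,\fact_k)$. Appending the identity $\ones^T \pi_k = 1$, which holds simply because $\pi_k$ is a probability vector, stacks one further row onto this matrix and yields exactly $F(\fact_k)\, \pi_k = [0\ \dots\ 0\ 1]^T$, with $F(\fact_k)$ as in \eref{eq:F_u_definition} and a single one in the last (normalization) entry.

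Finally I would invoke the rank condition. \asref{as:rank_of_cost_function} states that $F(\fact_k)$ has full column rank at each observed action, so its left pseudo-inverse satisfies $F(\fact_k)^\dagger F(\fact_k) = I$; multiplying the linear system on the left by $F(\fact_k)^\dagger$ then gives $\pi_k = F(\fact_k)^\dagger [0\ \dots\ 0\ 1]^T$, as claimed. The one step to handle with care is this concluding uniqueness argument, since the system is generically overdetermined: existence of a solution must be argued separately — it holds because the genuine posterior that produced the action $\fact_k$ satisfies both the stationarity and the normalization equations — and full column rank is then used purely to pin the solution down uniquely via $F(\fact_k)^\dagger F(\fact_k) = I$. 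The remaining manipulations are routine differentiation and bookkeeping.
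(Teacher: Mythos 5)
Your proof is correct and follows essentially the same route as the paper: the paper invokes \cite[Theorem 1]{mattila_estimating_2019} (which it describes as writing down the KKT conditions of \eref{eq:myopic_adversary} with the posterior as unknown), whereas you derive the stationarity condition $\sum_{i=1}^X [\pi_k]_i \nabla_\fact c(i,\fact_k) = 0$ directly from convexity, differentiability and unconstrainedness --- the same thing, made self-contained. Both arguments then append the normalization row to form $F(\fact_k)\pi_k = [0\ \dots\ 0\ 1]^T$, note non-emptiness (the true posterior is a solution), and use full column rank to get uniqueness via the pseudo-inverse; your explicit handling of the existence-versus-uniqueness distinction matches the paper's remark that the set is non-empty because the adversary made a decision.
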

The theorem follows directly from \cite[Theorem 1]{mattila_estimating_2019} and the linear
independence of columns of the matrix $F(u)$ in \eref{eq:F_u_definition} -- details are
available in the appendix.

\subsection{Solution to the Remote Sensor Calibration Problem}
\label{sec:solution_to_remote_calibration}

We will now outline a solution to \pref{pr:remote_calibration} that leverages the inverse
filtering algorithms from \sref{sec:proposed method}.

\subsubsection*{Step 1. Reconstruct Posteriors}

Using \thrmref{thrm:reconstruct_belief_from_action}, reconstruct the adversary's sequence
of posteriors $\pi_1, \dots, \pi_N$ from its observed actions and the structural form of
its policy.

\subsubsection*{Step 2. Reconstruct $\hat{P}$ and $\hat{B}$} 

Apply \algoref{alg:full_inverse_filtering} from \sref{sec:proposed method} on the sequence
of posteriors. Note that the posteriors were computed by the adversary using a mismatched
filter \eref{eq:game_model3} -- i.e., as $T(\pi, y; \hat{P}, \hat{B})$ --, so that
\algoref{alg:full_inverse_filtering} reconstructs the adversary's estimates $\hat{P}$ and
$\hat{B}$ of our transition matrix $P$ and its sensor $B$.

\subsubsection*{Step 3. Reconstruct the Observations}

As mentioned in Remark~\ref{rem:obtaining_observations}, once the filter's parameters are
known it is trivial to reconstruct the corresponding sequence of observations $y_1, \dots,
y_N$. 

\subsubsection*{Step 4. Calibrate the Adversary's Sensor}

We now have access to the observations $y_1, \dots, y_N$ that were realized by the HMM
$(P, B)$ and, by the setup of the CAA system \eref{eq:game_model1}-\eref{eq:game_model4},
the corresponding state sequence $x_1, \dots, x_N$. With this information, we can compute
\emph{our} maximum likelihood estimate $\check{B}$ of the adversary's sensor $B$ via
\begin{equation}
    [\check{B}]_{ij} = \frac{\sum_{k=1}^N \ind{x_k = i, y_k = j}}{\sum_{k=1}^N \ind{x_k = i}},
    \label{eq:ml_estimate_known_x}
\end{equation}
which corresponds to the M-step in the \emph{expectation-maximization} (EM) algorithm for
HMMs -- see, e.g., \cite[Section 6.2.3]{vidyasagar_hidden_2014}. This completes the
solution to \pref{pr:remote_calibration}.

\subsubsection*{Discussion}

It is worth making a few remarks at this point. First of all, it should be underlined that
$\hat{B} \neq \check{B}$ -- that is, our estimate $\check{B}$ is not necessarily equal to
that of the adversary $\hat{B}$. In fact, our estimate depends on the number $N$ of observed
actions and is, as such, \emph{improving} over time by the asymptotic consistency of
the maximum likelihood estimate. If the adversary does not recalibrate its estimate
$\hat{B}$ online, then for large enough $N$, our estimate will eventually \emph{be more
accurate} than the adversary's own estimate.

Moreover, the steps in \sref{sec:solution_to_remote_calibration} are independent of the
accuracy of the adversary's estimates $\hat{P}$ and $\hat{B}$ (as long as they fulfill
Assumptions~\ref{as:P_B_positive} and \ref{as:P_B_full_column}) since we are exploiting
the algebraic structure of its filter. This means that, even if the adversary employs a bad
estimate of our transition matrix $P$, as long as it is taking actions, we can improve our
estimate of its sensor $B$.

Finally, if the setup is modified so that we do not have access to the transition matrix
$P$ or the realized state sequence in \eref{eq:game_model1}, then the step
\eref{eq:ml_estimate_known_x} would be replaced by the full EM algorithm. This would
compute our estimates $\check{P}$ and $\check{B}$ of the system's transition matrix $P$
and the adversary's sensor $B$ based on the reconstructed observations made by the
adversary.  Again, by the asymptotic properties of the maximum likelihood estimate, the
accuracy of these estimates (that depend on $N$) will eventually surpass those of the
adversary.

\section{Numerical Examples}
\label{sec:numerical}

In this section, we evaluate and illustrate the proposed inverse filtering algorithms in
numerical examples. All simulations were run in MATLAB R2018a on a 1.9 GHz CPU.

\subsection{Reconstructing $P$ and $B$ via Inverse Filtering}

\begin{figure}[t!] 

    \centering
    \includegraphics{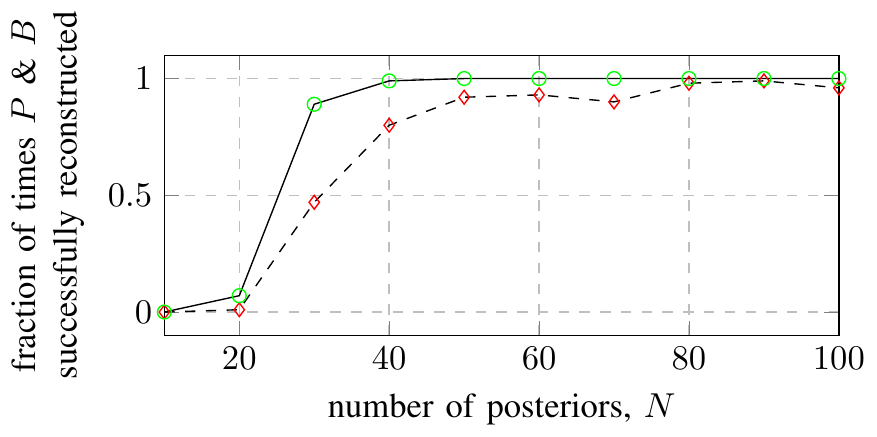}

    \caption{In \pref{pr:inverse_filter_discrete}, we obtain a sequence
        $\{\pi_k\}_{k=1}^N$ of posteriors from an HMM filter \eref{eq:hmm_filter}. Out of
        100 realizations, for each value of $N$, we compute the fraction of times that
        \algoref{alg:full_inverse_filtering} successfully reconstructs the transition
        matrix $P$ and the observation matrix $B$ of the HMM (red diamonds). We also plot
        the success rate of an oracle method (green circles) that has access to the
        observations. With around $N = 50$ posteriors, the algorithm succeeds frequently.}

    \label{fig:success_probability_random_walk}
\end{figure}

Recall that \pref{pr:inverse_filter_discrete} aims to reconstruct HMM parameters given
a sequence of posterior distributions. \algoref{alg:full_inverse_filtering} is
deterministic, but there is randomness in terms of the data (the realization of the HMM)
which can cause the algorithm to fail to reconstruct the HMM's parameters. This can happen
for three different reasons: First, if a certain observation has been measured too few
times then there is fundamentally too few equations available to reconstruct the
parameters -- see Remark~\ref{rem:number_of_equations} (in
\sref{sec:how_to_compute_nullspaces}). Second, if too few independent equations have been
generated, we do not have identifiability and cannot intersect to a one-dimensional
subspace in \eref{eq:intersecting_nullspaces_known_y} -- see
\thrmref{thrm:Bayesian_filter_unique_subset}. Third, we rely on a convex relaxation to
solve the original combinatorial problem. This is a heuristic and it is not guaranteed to
converge to a solution of the original problem. Hence, in these simulations, we estimate
the \emph{probability} of the algorithm succeeding (with respect to the realization of the
HMM data).

In order to demonstrate that the assumptions we have made in the paper are not strict, we
consider the following HMM:
\begin{equation}
    P = \begin{bmatrix}
            0 & \nicefrac{1}{2} & 0 & 0 & \nicefrac{1}{2} \\
            \nicefrac{1}{2} & 0 & \nicefrac{1}{2} & 0 & 0 \\
            0 & \nicefrac{1}{2} & 0 & \nicefrac{1}{2} & 0 \\
            0 & 0 & \nicefrac{1}{2} & 0 & \nicefrac{1}{2} \\
            \nicefrac{1}{2} & 0 & 0 & \nicefrac{1}{2} & 0 \\
    \end{bmatrix}, \;
    B = \begin{bmatrix}
            \nicefrac{2}{5} & \nicefrac{2}{5} & \nicefrac{1}{5} \\
            \nicefrac{2}{5} & \nicefrac{2}{5} & \nicefrac{1}{5} \\
            \nicefrac{2}{5} & \nicefrac{1}{5} & \nicefrac{2}{5} \\
            \nicefrac{1}{5} & \nicefrac{2}{5} & \nicefrac{2}{5} \\
            \nicefrac{1}{5} & \nicefrac{2}{5} & \nicefrac{2}{5} \\
    \end{bmatrix}.
    \label{eq:P_B_random_walk}
\end{equation}
Note that its transition matrix corresponds to a random walk, which violates \asref{as:P_B_positive}.
We consider a reconstruction successful if the error in norm is smaller than $10^{-3}$ for
both $P$ and $B$. We generated 100 independent realizations for a range of values of $N$
(the number of posteriors). The fractions of times the algorithms were successful are
plotted in \fref{fig:success_probability_random_walk} with red diamonds for
\algoref{alg:full_inverse_filtering}, and with green circles for an oracle method that has
access to the corresponding sequence of observations. The oracle method provides an upper
bound on the success rate (if it fails, it is not possible to uniquely reconstruct the HMM
parameters, as discussed above). The gap between the curves is due to the convex relaxation.

\begin{figure}[t!]

    \centering
    \includegraphics{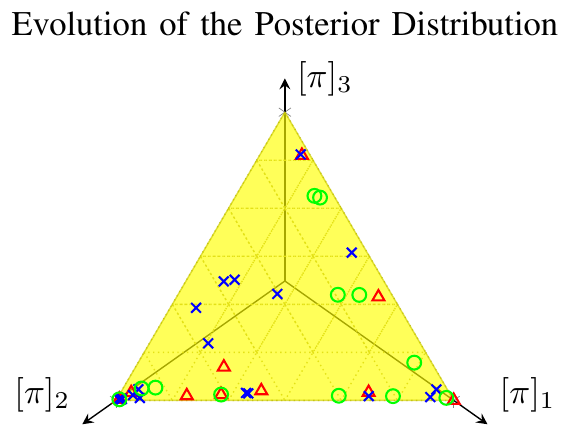}

\caption{A realization of the set $\{\pi_k\}_{k=1}^{50}$, corresponding to HMM
    \eref{eq:HMM_in_numerical_3_3}, illustrated on the simplex. The points are labeled
    according to what observation was measured: $y = 1$ (red triangle), $y = 2$ (blue
    cross), $y = 3$ (green circle). To fulfill the conditions of
\thrmref{thrm:Bayesian_filter_unique_subset}, the points corresponding to each observation
cannot all lie on the lines connecting any two points -- see
\fref{fig:condition_for_identifiability}.}

\label{fig:evolution_of_posteriors}

\end{figure}

A few things should be noted from \fref{fig:success_probability_random_walk}. First, with
only around 50 posteriors from the HMM filter, the fraction of times the algorithm
succeeds in solving \pref{pr:inverse_filter_discrete} is high. It was noted in
Remark~\ref{rem:expected_samples} (in \sref{sec:special_case_known_obs}), that since $B
\geq \nicefrac{1}{5} \eqdef \beta$, a rough estimate is that $\beta^{-1}YX^2 = 375$
posteriors should suffice -- however, as is clear from
\fref{fig:success_probability_random_walk}, this is a very conservative estimate since the
algorithm is very successful already with around 50 posteriors. Second, the gap between
the two curves is small -- hence, the convex relaxation is successful in achieving
a solution to the original combinatorial problem. Finally, it should be mentioned that
with 50 posteriors, the run-time is approximately thirty seconds.

\subsection{Evolution of the Posterior Distribution}

Next, to illustrate the conditions of \thrmref{thrm:Bayesian_filter_unique_subset}, we
plot the set $\{\pi_k\}_{k=1}^{50}$ on the simplex. To be able to visualize the
data, we now consider an HMM with dimension $X = 3$:
\begin{equation}
    P = \begin{bmatrix}
        0.8 & 0.1 & 0.1 \\
        0.05 & 0.9 & 0.05 \\
        0.2 & 0.1 & 0.7
    \end{bmatrix}, \quad
    B = \begin{bmatrix}
        0.7 & 0.1 & 0.2 \\
        0.1 & 0.8 & 0.1 \\
        0.05 & 0.05 & 0.9
    \end{bmatrix}.
    \label{eq:HMM_in_numerical_3_3}
\end{equation}

In \fref{fig:evolution_of_posteriors}, each posterior has been marked according to which
observation was subsequently measured. We can clearly find four posteriors, for each
observation, that are sufficiently disperse (see the illustration in
\fref{fig:condition_for_identifiability}), and hence fulfill the conditions for
\thrmref{thrm:Bayesian_filter_unique_subset} that guarantees a unique solution.

The results of simulations with the same setup as before are shown in
\fref{fig:success_probability}, and similar conclusions can be drawn. First, with around
50 posteriors, the algorithm has a high success rate. Second, the bound in
Remark~\ref{rem:expected_samples} (in \sref{sec:special_case_known_obs}) postulate an
expected number of posteriors of 540, which again is conservative.  

\begin{figure}[t!] 

    \centering
    \includegraphics{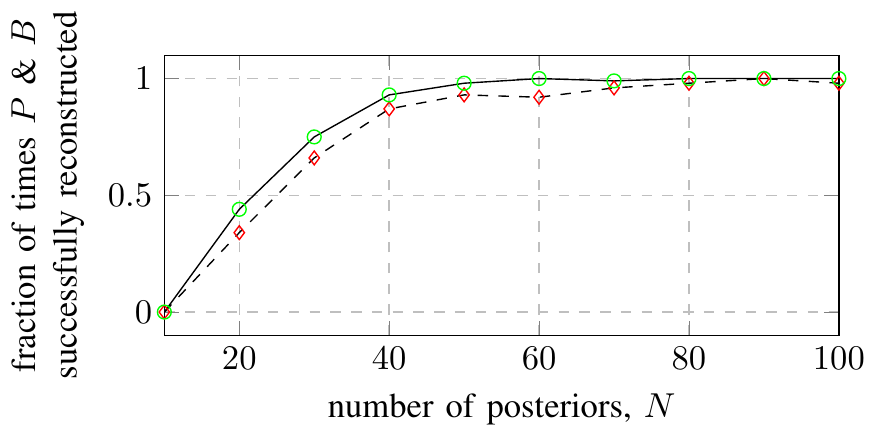}

    \caption{Same setup as in \fref{fig:success_probability_random_walk}, but for the HMM
    \eref{eq:HMM_in_numerical_3_3}. With access to around 50 posteriors,
\algoref{alg:full_inverse_filtering} succeeds in reconstructing $P$ and $B$ very
frequently.}

    \label{fig:success_probability}
\end{figure}

\section{Conclusions and Future work}

In this paper, we considered inverse filtering problems for \emph{hidden Markov models}
(HMMs) with finite observation spaces. We proposed algorithms to reconstruct the
transition kernel, the observation likelihoods and the measured observations from
a sequence of posterior distributions computed by an HMM filter. Our key algorithm is
a two-step procedure based on a convex relaxation and a subsequent local refinement. We
discussed and derived conditions for identifiability in inverse filtering. As an
application of our results, we demonstrated how the proposed algorithms can be employed in
the design of counter-adversarial autonomous systems: How to remotely calibrate the
sensors of an autonomous adversary based on its actions? Finally, the algorithms were
evaluated and verified in numerical simulations.

In the future, we would like to analyze the nullspace clustering problem further and see
if it applies to other settings. We would also like to generalize the setup to other
policy-structures of the adversary -- for example, when its action set is discrete. 

\bibliography{rob_references}

\begin{thebibliography}{10}

\bibitem{krishnamurthy_partially_2016}
V.~Krishnamurthy, {\em Partially Observed Markov Decision Processes}.
\newblock Cambridge University Press, 2016.

\bibitem{cappe_inference_2005}
O.~Capp\'e, E.~Moulines, and T.~Ryd\'en, {\em Inference in Hidden {Markov}
  Models}.
\newblock Springer, 2005.

\bibitem{anderson_optimal_1979}
B.~D.~O. Anderson and J.~B. Moore, {\em Optimal Filtering}.
\newblock Prentice-Hall, 1979.

\bibitem{kuptel_counter_2017}
A.~Kuptel, ``Counter unmanned autonomous systems ({CUAxS}): {Priorities.
  Policy. Future Capabilities},'' {\em Multinational Capability Development
  Campaign (MCDC)}, pp.~15--16, 2017.

\bibitem{krishnamurthy_how_2019}
V.~Krishnamurthy and M.~Rangaswamy, ``How to calibrate your adversary's
  capabilities? {Inverse} filtering for counter-autonomous systems,'' {\em IEEE
  Transactions on Signal Processing}, vol.~67, pp.~6511--6525, Dec 2019.

\bibitem{mattila_what_2020}
R.~Mattila, I.~Louren\c{c}o, V.~Krishnamurthy, C.~R. Rojas, and B.~Wahlberg,
  ``What did your adversary believe? {Optimal} filtering and smoothing in
  counter-adversarial autonomous systems,'' in {\em Proceedings of the IEEE
  International Conference on Acoustics, Speech and Signal Processing
  (ICASSP'20)}, 2020.

\bibitem{mattila_inverse_2017}
R.~Mattila, C.~R. Rojas, V.~Krishnamurthy, and B.~Wahlberg, ``Inverse filtering
  for hidden {Markov} models,'' in {\em Advances in Neural Information
  Processing Systems (NIPS'17)}, pp.~4207--4216, 2017.

\bibitem{mcgee_discovery_1985}
L.~McGee and S.~Schmidt, ``Discovery of the {Kalman} filter as a practical tool
  for aerospace and industry.'' National Aeronautics and Space Administration,
  Technical Report, NASA TM-86847, 1985.

\bibitem{kaplan_understanding_2005}
E.~Kaplan and C.~Hegarty, {\em Understanding {GPS}: principles and
  applications}.
\newblock Artech House, 2005.

\bibitem{sundvall_fault_2006}
P.~Sundvall, P.~Jensfelt, and B.~Wahlberg, ``Fault detection using redundant
  navigation modules,'' in {\em Proceedings of the 6th IFAC Symposium on Fault
  Detection, Supervision and Safety of Technical Processes (SAFEPROCESS'06)},
  vol.~1, pp.~522--527, 2006.

\bibitem{wahlberg_observers_2009}
B.~Wahlberg and A.~C. Bittencourt, ``Observers data only fault detection,''
  {\em Proceedings of the 7th IFAC Symposium on Fault Detection, Supervision
  and Safety of Technical Processes (SAFEPROCESS'09)}, vol.~42, pp.~959--964,
  2009.

\bibitem{li_jamming_2015}
Y.~Li, L.~Shi, P.~Cheng, J.~Chen, and D.~E. Quevedo, ``Jamming attacks on
  remote state estimation in cyber-physical systems: {A} game-theoretic
  approach,'' {\em IEEE Transactions on Automatic Control}, vol.~60, no.~10,
  pp.~2831--2836, 2015.

\bibitem{vidal_subspace_2011}
R.~Vidal, ``Subspace clustering,'' {\em IEEE Signal Processing Magazine},
  vol.~28, pp.~52--68, Mar. 2011.

\bibitem{yuan_model_2006}
M.~Yuan and Y.~Lin, ``Model selection and estimation in regression with grouped
  variables,'' {\em Journal of the Royal Statistical Society. Series B
  (Statistical Methodology)}, vol.~68, no.~1, pp.~49--67, 2006.

\bibitem{kalman_when_1964}
R.~E. Kalman, ``When is a linear control system optimal,'' {\em Journal of
  Basic Engineering}, vol.~86, no.~1, pp.~51--60, 1964.

\bibitem{hadfield-menell_cooperative_2016}
D.~Hadfield-Menell, S.~J. Russell, P.~Abbeel, and A.~Dragan, ``Cooperative
  inverse reinforcement learning,'' in {\em Advances in {Neural} {Information}
  {Processing} {Systems} (NIPS'16)}, 2016.

\bibitem{choi_nonparametric_2012}
J.~Choi and K.-E. Kim, ``Nonparametric {Bayesian} inverse reinforcement
  learning for multiple reward functions,'' in {\em Advances in {Neural}
  {Information} {Processing} {Systems} (NIPS'12)}, 2012.

\bibitem{klein_inverse_2012}
E.~Klein, M.~Geist, B.~Piot, and O.~Pietquin, ``Inverse {Reinforcement}
  {Learning} through {Structured} {Classification},'' in {\em Advances in
  {Neural} {Information} {Processing} {Systems} (NIPS'12)}, 2012.

\bibitem{levine_nonlinear_2011}
S.~Levine, Z.~Popovic, and V.~Koltun, ``Nonlinear inverse reinforcement
  learning with {Gaussian} processes,'' in {\em Advances in {Neural}
  {Information} {Processing} {Systems} (NIPS'11)}, 2011.

\bibitem{ng_algorithms_2000}
A.~Ng, ``Algorithms for inverse reinforcement learning,'' in {\em Proceedings
  of the 17th International Conference on Machine Learning (ICML'00)},
  pp.~663--670, 2000.

\bibitem{chamley_rational_2004}
C.~Chamley, {\em Rational herds: Economic models of social learning}.
\newblock Cambridge University Press, 2004.

\bibitem{gertler_fault_1998}
J.~Gertler, {\em Fault Detection and Diagnosis in Engineering Systems}.
\newblock Marcel Dekker, Inc., 1998.

\bibitem{gustafsson_adaptive_2000}
F.~Gustafsson, {\em Adaptive filtering and change detection}.
\newblock Wiley, 2000.

\bibitem{gustafsson_statistical_2007}
F.~Gustafsson, ``Statistical signal processing approaches to fault detection,''
  {\em Annual Reviews in Control}, vol.~31, no.~1, pp.~41--54, 2007.

\bibitem{chen_robust_1999}
J.~Chen and R.~J. Patton, {\em Robust {Model}-{Based} {Fault} {Diagnosis} for
  {Dynamic} {Systems}}.
\newblock Springer, 1999.

\bibitem{mattila_inverse_2018}
R.~Mattila, C.~R. Rojas, V.~Krishnamurthy, and B.~Wahlberg, ``Inverse filtering
  for linear {Gaussian} state-space models,'' in {\em Proceedings of the IEEE
  Conference on Decision and Control}, pp.~5556--5561, 2018.

\bibitem{baum_statistical_1966}
L.~E. Baum and T.~Petrie, ``Statistical inference for probabilistic functions
  of finite state {Markov} chains,'' {\em The Annals of Mathematical
  Statistics}, vol.~37, no.~6, pp.~1554--1563, 1966.

\bibitem{buchta_spherical_2012}
C.~Buchta, M.~Kober, I.~Feinerer, and K.~Hornik, ``Spherical k-means
  clustering,'' {\em Journal of Statistical Software}, vol.~50, no.~10,
  pp.~1--22, 2012.

\bibitem{barni_coping_2013}
M.~Barni and F.~P{\'e}rez-Gonz{\'a}lez, ``Coping with the enemy: {Advances} in
  adversary-aware signal processing,'' in {\em IEEE International Conference on
  Acoustics, Speech and Signal Processing (ICASSP'13)}, pp.~8682--8686, IEEE,
  2013.

\bibitem{farwell_stuxnet_2011}
J.~P. Farwell and R.~Rohozinski, ``Stuxnet and the future of cyber war,'' {\em
  Survival}, vol.~53, no.~1, pp.~23--40, 2011.

\bibitem{machina_choice_1987}
M.~J. Machina, ``Choice under uncertainty: Problems solved and unsolved,'' {\em
  Journal of Economic Perspectives}, vol.~1, no.~1, pp.~121--154, 1987.

\bibitem{mas_microeconomic_1995}
A.~Mas-Colell, M.~D. Whinston, and J.~R. Green, {\em Microeconomic theory},
  vol.~1.
\newblock Oxford university press New York, 1995.

\bibitem{luenberger_microeconomic_1995}
D.~G. Luenberger, {\em Microeconomic theory}.
\newblock Mcgraw-Hill College, 1995.

\bibitem{mattila_estimating_2019}
R.~Mattila, I.~Louren\c{c}o, C.~R. Rojas, V.~Krishnamurthy, and B.~Wahlberg,
  ``Estimating private beliefs of {Bayesian} agents based on observed
  decisions,'' {\em IEEE Control Systems Letters}, vol.~3, pp.~523--528, July
  2019.

\bibitem{iyengar_inverse_2005}
G.~Iyengar and W.~Kang, ``Inverse conic programming with applications,'' {\em
  Operations Research Letters}, vol.~33, pp.~319 -- 330, 2005.

\bibitem{vidyasagar_hidden_2014}
M.~Vidyasagar, {\em Hidden {Markov} Processes: Theory and Applications to
  Biology}.
\newblock Princeton University Press, 2014.

\bibitem{horn_topics_1991}
R.~A. Horn and C.~R. Johnson, {\em Topics in matrix analysis}.
\newblock Cambridge University Press, 1991.

\end{thebibliography}

%%%% START APPENDICES %%%%
\nobalance
\appendices

\section{Proof of \thrmref{thrm:Bayesian_filter_unique}}

To prove \thrmref{thrm:Bayesian_filter_unique}, we will use the following auxiliary
lemma:
\begin{lemma}
    Let $A \in \Rb^{X \times X}$ and $M \in \Rb^{X \times X}$ be two non-singular
    matrices. If
    \begin{equation}
        Ax = \kappa(x) Mx, \qquad \forall x \in \Delta,
        \label{eq:Ax_alpha_Mx}
    \end{equation}
    where $\kappa(x)$ is a non-zero scalar, and $\Delta = \{ x \in \Rb^X : x \geq
    0, \ones^T x = 1\}$ is the unit simplex, then
    \begin{equation}
        A = \kappa M,
    \end{equation}
    where $\kappa$ is a non-zero constant scalar.
    
    \label{eq:eigenvalue_constant_on_simplex}
\end{lemma}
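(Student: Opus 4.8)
The plan is to exploit the fact that $M$ is nonsingular by passing to the matrix $C \eqdef M^{-1}A$. Under this substitution, the hypothesis $Ax = \kappa(x)Mx$ for all $x \in \Delta$ becomes $Cx = \kappa(x)x$ for all $x \in \Delta$. In other words, \emph{every} point of the simplex is an eigenvector of $C$, with a (possibly point-dependent) scalar eigenvalue $\kappa(x)$. The goal is then to show that $C = \kappa I$ for a single constant $\kappa$, since this immediately yields $A = \kappa M$. The crux of the argument is therefore purely linear-algebraic: a matrix for which a spanning set of vectors are all eigenvectors must be a scalar multiple of the identity.

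First I would select the Cartesian basis vectors $e_1, \dots, e_X$, each of which (after normalization, though they are already on $\Delta$) lies on the simplex, so that $Ce_i = \kappa(e_i)e_i$ for each $i$; writing $\kappa_i \eqdef \kappa(e_i)$, this says $C$ is diagonal with diagonal entries $\kappa_1, \dots, \kappa_X$. Next I would test $C$ against an interior point such as the barycenter $v = \tfrac{1}{X}\ones \in \Delta$, which has all coordinates nonzero. The eigenvector condition gives $Cv = \kappa(v)v$, i.e.\ $\kappa_i [v]_i = \kappa(v)[v]_i$ for every $i$; since each $[v]_i \neq 0$, we may cancel to conclude $\kappa_i = \kappa(v)$ for all $i$. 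Hence all diagonal entries of $C$ coincide, so $C = \kappa I$ with $\kappa \eqdef \kappa(v)$, and therefore $A = \kappa M$. Finally, $\kappa \neq 0$ because $A$ is nonsingular (a zero $\kappa$ would force $A = 0$).

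I do not expect a serious obstacle here; the only point that requires a little care is ensuring that the vectors used to pin down $C$ genuinely lie in $\Delta$ and that the chosen interior point has strictly positive coordinates so the cancellation $\kappa_i[v]_i = \kappa(v)[v]_i \Rightarrow \kappa_i = \kappa(v)$ is valid. The nonsingularity of $M$ is what makes the reduction to $C = M^{-1}A$ legitimate, and the nonsingularity of $A$ (equivalently that $\kappa(x)$ is never zero) is what guarantees the resulting constant $\kappa$ is itself nonzero, matching the statement.
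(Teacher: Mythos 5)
Your proof is correct and is essentially the paper's own argument in a lightly repackaged form: the paper writes $A = M\diag{\kappa(e_1),\dots,\kappa(e_X)}$ from the basis vectors and then cancels against a simplex point with all nonzero coordinates, which is exactly your reduction $C = M^{-1}A = \kappa I$ tested at the barycenter. The only cosmetic difference is that you pick the specific interior point $\tfrac{1}{X}\ones$ where the paper allows any such point; no gap either way.
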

\begin{proof}
    Consider the $i$th Cartesian basis vector $e_i \in \Delta$:
    \begin{equation}
        A e_i = \kappa(e_i) M e_i.
        \label{eq:ev_equation_for_e_i}
    \end{equation}
    Concatenate \eref{eq:ev_equation_for_e_i} for $i = 1, \dots, X$, to get
    \begin{align}
        A \begin{bmatrix} e_1 & \dots & e_X \end{bmatrix} &= M \begin{bmatrix} \kappa(e_1)
    e_1 & \dots & \kappa(e_X) e_X \end{bmatrix} \implies \notag \\
        A &= M \begin{bmatrix}
            \kappa(e_1) & & 0 \\
                        & \ddots & \\
                      0 & & \kappa(e_X)
        \end{bmatrix}.
        \label{eq:diagonal_with_different_elements}
    \end{align}
    Next, consider any vector on the simplex with non-zero components:
    \begin{equation}
        x = [x]_1 e_1 + \dots + [x]_X e_X \qquad \in \Delta,
    \end{equation}
    such that $[x]_i \neq 0$ for $i = 1, \dots, X$. Introducing
    \eref{eq:diagonal_with_different_elements} in \eref{eq:Ax_alpha_Mx} for this $x$
    yields
    \begin{align}
        M& \begin{bmatrix}
            \kappa(e_1) & & 0 \\
                        & \ddots & \\
                      0 & & \kappa(e_X)
          \end{bmatrix} \left( [x]_1 e_1 + \dots + [x]_X e_X \right) \notag \\
          &= \kappa(x) M \left( [x]_1 e_1 + \dots + [x]_X e_X \right) \implies \notag \\
          \kappa(e_1) &[x]_1 e_1 + \dots + \kappa(e_X) [x]_X e_X \notag \\
                      &\hspace{1.1cm}= \kappa(x) [x]_1 e_1 + \dots + \kappa(x) [x]_1 e_X , 
          \label{eq:kappa_function_kappa_constant}
    \end{align}
    where in the implication we have multiplied by $M^{-1}$ from the left and simplified
    the expression. Since the $e_i$:s are linearly independent, consider any component of
    \eref{eq:kappa_function_kappa_constant}:
    \begin{align}
        \kappa(e_i) [x]_i &= \kappa(x) [x]_i \implies \notag \\
        \kappa(e_i) &= \kappa(x),
    \end{align}
    for $i = 1, \dots, X$, since $[x]_i \neq 0$. In other words,
    \begin{equation}
        \kappa(e_1) = \dots = \kappa(e_X) = \kappa(x) \eqdef \kappa
    \end{equation}
    is constant.
    %and, hence, that $\kappa(x)$ has to be constant $\kappa(x) = \kappa$. 
    Introducing this in \eref{eq:diagonal_with_different_elements} yields
    \begin{align}
        A &= M \begin{bmatrix}
            \kappa & & 0 \\
                        & \ddots & \\
                      0 & & \kappa
        \end{bmatrix} \notag \\
        &= \kappa M.
    \end{align}

\end{proof}
To employ \lref{eq:eigenvalue_constant_on_simplex}, we reformulate \eref{eq:T_equal_T_tilde} as follows:
\begin{align}
    T(\pi, y; P, B) &= T(\pi, y; \tilde{P}, \tilde{B}) \implies \notag \\
    \frac{\diag{b_{y}} P^T \pi}{\ones^T \diag{b_{y}} P^T \pi} &=
    \frac{\diag{\tilde{b}_{y}} \tilde{P}^T \pi}{\ones^T \diag{\tilde{b}_{y}} \tilde{P}^T
    \pi} \implies \notag \\
    \diag{b_{y}} P^T \pi&=
    \frac{\ones^T \diag{b_{y}} P^T \pi}{\ones^T \diag{\tilde{b}_{y}} \tilde{P}^T \pi}
    \diag{\tilde{b}_{y}} \tilde{P}^T \pi,
    \label{eq:T_equal_T_tilde_reformulated}
\end{align}
which holds for all $\pi \in \Delta$ and $y = 1, \dots, Y$.

Next, we consider \eref{eq:T_equal_T_tilde_reformulated} for a fixed $y$, and note that the
matrices $\diag{b_{y}} P^T$ and $\diag{\tilde{b}_{y}} \tilde{P}^T$ are non-singular (by
Assumptions~\ref{as:P_B_positive} and \ref{as:P_B_full_column}).
\lref{eq:eigenvalue_constant_on_simplex} then yields that
\begin{align}
    \diag{b_{y}} P^T &= \alpha(y) \diag{\tilde{b}_y} \tilde{P}^T  \implies \notag \\
    P \diag{b_{y}} &= \tilde{P} \alpha(y) \diag{\tilde{b}_y},
    \label{eq:Pdiag_tildeP}
\end{align}
where $\alpha(y) \in \Rb$ is a scalar and which holds for $y = 1, \dots, Y$.

If we sum equations \eref{eq:Pdiag_tildeP} over $y$, and use the fact that $B$ is a
stochastic matrix,
\begin{align}
    \sum_{y=1}^Y P \diag{b_{y}} &= \sum_{y=1}^Y \tilde{P} \alpha(y) \diag{\tilde{b}_y}, & \implies \notag \\
    P &= \tilde{P} \sum_{y=1}^Y \alpha(y) \diag{\tilde{b}_y}, &
    \label{eq:Pdiag_tildeP_sum_over_y}
\end{align}
and then right-multiply by $\ones$, and use that $P$ is a stochastic matrix, we obtain
\begin{align}
    P \ones &= \tilde{P} \sum_{y=1}^Y \alpha(y) \diag{\tilde{b}_y} \ones, & \implies \notag \\
    \ones &= \tilde{P} \sum_{y=1}^Y \alpha(y) \tilde{b}_y. &
\end{align}
Pre-multiplying by $\tilde{P}^{-1}$ yields
\begin{align}
    \tilde{P}^{-1} \ones &= \tilde{P}^{-1} \tilde{P} \sum_{y=1}^Y \alpha(y) \tilde{b}_y & \implies \notag \\
    \ones &= \sum_{y=1}^Y \alpha(y) \tilde{b}_y, & 
    \label{eq:tilde_B_alpha_weigth}
\end{align}
where we have used the fact that the row-sums of the inverse of a (row) stochastic matrix
are all equal to one\footnote{To see this, consider an invertible stochastic matrix $A$: $A\ones
= \ones \implies A^{-1} A \ones = A^{-1} \ones \implies A^{-1}\ones = \ones$.}. By 
applying the $\diag{}$-operation to \eref{eq:tilde_B_alpha_weigth}, we see that
\begin{equation}
    I = \sum_{y=1}^Y \alpha(y) \diag{\tilde{b}_y},
\end{equation}
which when introduced in \eref{eq:Pdiag_tildeP_sum_over_y} yields that 
\begin{equation}
    \tilde{P} = P.
\end{equation}

Next, note that \eref{eq:tilde_B_alpha_weigth} can be rewritten as $\ones = \tilde{B}
\begin{bmatrix} \alpha(1) & \dots & \alpha(Y) \end{bmatrix}^T$. We know that
$\tilde{B} \ones = \ones$, since it is a stochastic matrix, and that $\tilde{B}$ has full
column rank by assumption. Hence, $\ones = \begin{bmatrix} \alpha(1) & \dots & \alpha(Y)
\end{bmatrix}^T$. This yields 
\begin{equation}
    b_y = \tilde{b}_y,
\end{equation}
for $y = 1, \dots, Y$, from \eref{eq:Pdiag_tildeP} by first pre-multiplying by $P^{-1}$.

The other direction is trivial: if $P = \tilde{P}$ and $B = \tilde{B}$, then $T(\pi, y; P,
B) = T(\pi, y; \tilde{P}, \tilde{B})$ for all $\pi \in \Delta$ and each $y = 1, \dots, Y$ by \eref{eq:hmm_filter}:
\begin{equation}
    \frac{\diag{b_{y}} P^T \pi}{\ones^T \diag{b_{y}} P^T \pi} =
    \frac{\diag{\tilde{b}_{y}} \tilde{P}^T \pi}{\ones^T \diag{\tilde{b}_{y}} \tilde{P}^T
    \pi}.
\end{equation}

\section{Proof of \thrmref{thrm:Bayesian_filter_unique_subset}}

We begin by giving a generalization of \lref{eq:eigenvalue_constant_on_simplex}:
\begin{lemma}
    Let $A \in \Rb^{X \times X}$ and $M \in \Rb^{X \times X}$ be two non-singular matrices
    and $\mathcal{Z} = \{z_1, \dots, z_X, z_{X+1}\}$ be a set of vectors where $z_1,
    \dots, z_X \in \Rb^X$ are linearly independent and the last vector when expressed in
    this basis has non-zero components -- that is, $z_{X+1} \in \Rb^X$ can be written
    \begin{equation}
        z_{X+1} = [\beta]_1 z_1 + \dots + [\beta]_X z_X,
        \label{eq:express_z_with_beta}
    \end{equation}
    with $\beta \in \Rb^X$ and $[\beta]_i \neq 0$ for $i = 1, \dots, X$. If
    \begin{equation}
        Ax = \kappa(x) Mx, \qquad \forall x \in \mathcal{Z},
        \label{eq:Ax_alpha_Mx_general}
    \end{equation}
    where $\kappa(x)$ is a non-zero scalar, then
    \begin{equation}
        A = \kappa M,
    \end{equation}
    where $\kappa$ is a non-zero constant scalar.

    \label{eq:eigenvalue_constant_on_simplex_general}
\end{lemma}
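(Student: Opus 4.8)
The plan is to mirror the proof of \lref{eq:eigenvalue_constant_on_simplex}, but to replace the two special families of test points used there -- the Cartesian basis vectors $e_i$ and an interior point with non-zero components -- by the data provided in the statement: the linearly independent vectors $z_1, \dots, z_X$ will play the role of a basis, and $z_{X+1}$ will play the role of the interior point, its hypothesis $[\beta]_i \neq 0$ being precisely the analogue of ``all components non-zero''.

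First I would evaluate \eref{eq:Ax_alpha_Mx_general} at each of the linearly independent vectors, obtaining $A z_i = \kappa(z_i) M z_i$ for $i = 1, \dots, X$, and concatenate these columnwise. Writing $Z = \begin{bmatrix} z_1 & \dots & z_X \end{bmatrix}$ and $D = \diag{\kappa(z_1), \dots, \kappa(z_X)}$, this reads $AZ = MZD$. Since $z_1, \dots, z_X$ are linearly independent, $Z$ is invertible, and (using non-singularity of $M$) we may solve for
\begin{equation}
    A = M Z D Z^{-1}.
\end{equation}
This is the exact counterpart of \eref{eq:diagonal_with_different_elements}; in the original lemma $Z = I$ because the test vectors were the Cartesian basis.

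Next I would exploit the last vector $z_{X+1} = Z\beta$, cf.\ \eref{eq:express_z_with_beta}. Substituting the expression for $A$ into \eref{eq:Ax_alpha_Mx_general} evaluated at $x = z_{X+1}$ gives $M Z D Z^{-1} Z \beta = \kappa(z_{X+1}) M Z \beta$, i.e.\ $M Z D \beta = \kappa(z_{X+1}) M Z \beta$. Cancelling the invertible factor $MZ$ from the left leaves the eigenvalue-type relation $D\beta = \kappa(z_{X+1})\beta$. Reading this componentwise yields $\kappa(z_i)[\beta]_i = \kappa(z_{X+1})[\beta]_i$, and since every $[\beta]_i \neq 0$ we conclude $\kappa(z_1) = \dots = \kappa(z_X) = \kappa(z_{X+1}) \eqdef \kappa$. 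Hence $D = \kappa I$, and substituting back gives $A = M Z (\kappa I) Z^{-1} = \kappa M$, as claimed.

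I do not expect a serious obstacle here: the argument is a clean change-of-basis version of the simplex lemma. The one point requiring care is the algebraic bookkeeping for the non-orthonormal basis -- unlike the original lemma, where the test vectors were the Cartesian basis and one obtained $A = MD$ directly, here one must carry the factors $Z$ and $Z^{-1}$ through the computation until they cancel in the final step. The role of the non-degeneracy hypothesis $[\beta]_i \neq 0$ is the crux: it is exactly what forces the single relation $D\beta = \kappa(z_{X+1})\beta$ to collapse all the candidate scalars $\kappa(z_i)$ onto one common value, just as the interior-point step did in \lref{eq:eigenvalue_constant_on_simplex}.
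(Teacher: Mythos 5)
Your proposal is correct and follows essentially the same route as the paper's proof: evaluate \eref{eq:Ax_alpha_Mx_general} on the basis $z_1,\dots,z_X$ to get $AZ = MZD$ with $D = \diag{\kappa(z_1),\dots,\kappa(z_X)}$, then use $z_{X+1} = Z\beta$ and the invertibility of $MZ$ to obtain $D\beta = \kappa(z_{X+1})\beta$, whence the hypothesis $[\beta]_i \neq 0$ collapses all scalars to a single $\kappa$ and $A = \kappa M$. The only cosmetic difference is that you solve explicitly for $A = MZDZ^{-1}$ before testing $z_{X+1}$, whereas the paper keeps the relation in the form $AZ = MZD$ and cancels $Z$ only at the very end; the content is identical.
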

\begin{proof}

    For the $i$th vector in $\mathcal{Z}$, we have by \eref{eq:Ax_alpha_Mx_general} that 
    \begin{equation}
        A z_i = \kappa(z_i) M z_i,
    \end{equation}
    which can be concatenated to
    \begin{align}
        A \begin{bmatrix} z_1 & \dots & z_X \end{bmatrix} &= M \begin{bmatrix} \kappa(z_1)
    z_1 & \dots & \kappa(z_X) z_X \end{bmatrix} \implies \notag \\
        AZ &= M  \begin{bmatrix} z_1 & \dots & z_X \end{bmatrix}
        \begin{bmatrix}
            \kappa(z_1) & & 0 \\
                        & \ddots & \\
                      0 & & \kappa(z_X)
        \end{bmatrix} \notag \\
        &= MZ \begin{bmatrix}
            \kappa(z_1) & & 0 \\
                        & \ddots & \\
                      0 & & \kappa(z_X)
        \end{bmatrix},
        \label{eq:A_applied_to_Z}
    \end{align}
    where we have denoted $Z = \begin{bmatrix} z_1 & \dots & z_X \end{bmatrix}$.
    
    We can rewrite \eref{eq:express_z_with_beta} with this definition of $Z$ as
    \begin{equation}
        z_{X+1} = Z\beta,
        \label{eq:z_X1_matrix_vector}
    \end{equation}
    which together with \eref{eq:A_applied_to_Z} yields
    \begin{align}
        A z_{X+1} &= AZ\beta \notag \\
                  &= MZ \begin{bmatrix}
                            \kappa(z_1) & & 0 \\
                                        & \ddots & \\
                                      0 & & \kappa(z_X)
                        \end{bmatrix} \beta.
        \label{eq:AzX1_first}
    \end{align}
    Next, by employing \eref{eq:Ax_alpha_Mx_general} for $z_{X+1}$ and using
    \eref{eq:z_X1_matrix_vector}, we obtain
    \begin{align}
        A z_{X+1} &= \kappa(z_{X+1}) M z_{X+1} \notag \\
                  &= \kappa(z_{X+1}) M Z \beta.
        \label{eq:AzX1_second}
    \end{align}
    Equating \eref{eq:AzX1_first} and \eref{eq:AzX1_second} yields
    \begin{align}
        MZ \begin{bmatrix}
            \kappa(z_1) & & 0 \\
                        & \ddots & \\
                      0 & & \kappa(z_X)
      \end{bmatrix} \beta &= \kappa(z_{X+1}) M Z \beta \implies \notag \\
        \begin{bmatrix}
            \kappa(z_1) & & 0 \\
                        & \ddots & \\
                      0 & & \kappa(z_X)
      \end{bmatrix} \beta &= \kappa(z_{X+1}) \beta,
      \label{eq:beta_two_expressions}
    \end{align}
    by multiplying by the inverse of $MZ$ from the left. The $i$th component of
    \eref{eq:beta_two_expressions} is
    \begin{align}
        \kappa(z_i) [\beta]_i &= \kappa(z_{X+1}) [\beta]_i \implies \notag \\
        \kappa(z_i) &= \kappa(z_{X+1}),
    \end{align}
    since $[\beta]_i \neq 0$. Hence,
    \begin{equation}
        \kappa(z_1) = \dots = \kappa(z_X) = \kappa(z_{X+1}) \eqdef \kappa,
    \end{equation}
    which when introduced in \eref{eq:A_applied_to_Z} yields
    \begin{align}
        AZ &= MZ \begin{bmatrix}
            \kappa & & 0 \\
                        & \ddots & \\
                      0 & & \kappa
        \end{bmatrix} \notag \\
        &= \kappa MZ,
    \end{align}
    or, finally, $A = \kappa M$ by multiplying with the inverse of $Z$ from the right.

\end{proof}

\begin{remark}

    Note that \lref{eq:eigenvalue_constant_on_simplex} follows from
    \lref{eq:eigenvalue_constant_on_simplex_general} by considering the vectors $e_1,
    \dots, e_X \in \Delta$ and any vector in the interior of the simplex.

\end{remark}

As in the proof of \thrmref{thrm:Bayesian_filter_unique}, to employ
\lref{eq:eigenvalue_constant_on_simplex_general}, we first reformulate
\eref{eq:T_equal_T_tilde} as follows:
\begin{align}
    T(\pi, y; P, B) &= T(\pi, y; \tilde{P}, \tilde{B}) \implies \notag \\
    \frac{\diag{b_{y}} P^T \pi}{\ones^T \diag{b_{y}} P^T \pi} &=
    \frac{\diag{\tilde{b}_{y}} \tilde{P}^T \pi}{\ones^T \diag{\tilde{b}_{y}} \tilde{P}^T
    \pi} \implies \notag \\
    \diag{b_{y}} P^T \pi&=
    \frac{\ones^T \diag{b_{y}} P^T \pi}{\ones^T \diag{\tilde{b}_{y}} \tilde{P}^T \pi}
    \diag{\tilde{b}_{y}} \tilde{P}^T \pi,
\end{align}
which holds for $y = 1, \dots, Y$ and $\pi \in \Delta_y$.

For a fixed $y$, the conditions of \lref{eq:eigenvalue_constant_on_simplex_general} are
fulfilled; identify $\Delta_y$ with the set $\mathcal{Z}$ and note that the matrices
$\diag{\tilde{b}_{y}} \tilde{P}^T$ and $\diag{\tilde{b}_{y}} \tilde{P}^T$ are
non-singular. Hence, 
\begin{equation}
    \diag{b_{y}} P^T = \alpha(y) \diag{\tilde{b}_y} \tilde{P}^T,
\end{equation}
or, by taking the transpose,
\begin{equation}
    P \diag{b_{y}} = \tilde{P} \alpha(y) \diag{\tilde{b}_y},
\end{equation}
for $y = 1, \dots, Y$. The setup is now exactly the same as after \eref{eq:Pdiag_tildeP}
in the proof of \thrmref{thrm:Bayesian_filter_unique} -- the rest of the proof is
identical.

\section{Proof of \thrmref{thrm:hmm_filter_nullspace_PB}}

Multiply expression \eref{eq:hmm_filter} for the HMM filter by its denominator to
obtain \eref{eq:hmm_filter_no_frac}, and then reshuffle the terms:
\begin{align}
    \ones^T \diag{b_{y_{k}}} P^T \pi_{k-1} \pi_{k} = \diag{b_{y_{k}}} P^T \pi_{k-1} & \iff \notag \\
    \pi_{k} \ones^T \diag{b_{y_{k}}} P^T \pi_{k-1} = \diag{b_{y_{k}}} P^T \pi_{k-1} & \iff \notag \\
        \left( \pi_{k} \ones^T - I \right) \diag{b_{y_{k}}} P^T \pi_{k-1} = 0. &
    \label{eq:nullspace_equation_before_vec}
\end{align}
By vectorizing and applying a well-known result relating the vectorization operator to
Kronecker products\cite{horn_topics_1991}, with an appropriate grouping of the terms, we
obtain that
\begin{align}
    \vec{\left[ \pi_{k} \ones^T - I \right] \left( \diag{b_{y_{k}}} P^T \right) \pi_{k-1}} = 0 & \iff \notag \\
    \left( \pi_{k-1}^T \otimes [\pi_{k} \ones^T - I] \right) \vec{\diag{b_{y_{k}}} P^T} = 0. &
\end{align}

\section{Proof of \thrmref{thrm:reconstruct_P_and_B}}

By definition, we have that
\begin{align}
    \Bysum &\eqdef \sum_{y=1}^Y V_y^T \notag \\
           &= \sum_{y=1}^Y (\alpha_y \diag{b_y} P^T)^T \notag \\
           &= P \sum_{y=1}^Y \alpha_y \diag{b_y}.
    \label{eq:Bysum_explicit}
\end{align}
First, note that $\Bysum$ is invertible since $P$ is invertible (by
\asref{as:P_B_full_column}) and the result of the summation is a diagonal matrix with strictly
positive entries (by \asref{as:P_B_positive}). Next, we evaluate $\Bysum \diag{\Bysum^{-1}
\ones}$ by introducing \eref{eq:Bysum_explicit}:
\begin{align}
    \Bysum & \diag{\Bysum^{-1} \ones} \notag \\
           &= P \left( \sum_{y=1}^Y \alpha_y \diag{b_y} \right) \diag{(P \sum_{y=1}^Y \alpha_y
    \diag{b_y})^{-1} \ones} \notag \\
    &= P \left( \sum_{y=1}^Y \alpha_y \diag{b_y} \right) \diag{(\sum_{y=1}^Y \alpha_y \diag{b_y})^{-1} P^{-1} \ones} \notag \\
    &= P \left( \sum_{y=1}^Y \alpha_y \diag{b_y} \right) \diag{(\sum_{y=1}^Y \alpha_y \diag{b_y})^{-1} \ones} \notag \\
    &= P \left( \sum_{y=1}^Y \alpha_y \diag{b_y} \right) \left( \sum_{y=1}^Y \alpha_y \diag{b_y} \right)^{-1} \notag \\
    &= P,
\end{align}
where in the third equality we used the fact that the inverse of a row-stochastic matrix
has elements on each row that sum to one\footnote{Assume $A$ is invertible and
row-stochastic: $A \ones = \ones \implies A^{-1} A \ones = A^{-1} \ones \implies \ones
= A^{-1} \ones$.}, and in the fourth that the result of the summation is a diagonal matrix
and that it has a diagonal inverse that is obtained by inverting each element. This allows
us to reconstruct the transition matrix $P$.

To reconstruct the observation matrix, we proceed as follows. First note that by
multiplying $V_y$ by $P^{-T} \ones$ from the right, we obtain
\begin{align}
    V_y P^{-T} \ones &= \alpha_y \diag{b_y} P^T (P^{-T} \ones) \notag \\
                     &= \alpha_y b_y,
\end{align}
which is column $y$ of the observation matrix scaled by a factor $\alpha_y$. By
horizontally stacking such vectors, we build the matrix
\begin{align}
    \bar{B} &\eqdef \begin{bmatrix} V_1 P^{-T}\ones & \dots & V_Y P^{-T}\ones\end{bmatrix} \notag \\
            &= \begin{bmatrix} \alpha_1 b_1 & \cdots & \alpha_Y b_Y \end{bmatrix} \notag \\
            &= \begin{bmatrix} b_1 & \cdots & b_Y \end{bmatrix} \diag{[\alpha_1 \; \dots \; \alpha_Y]^T} \notag \\
            &= B \diag{[\alpha_1 \; \dots \; \alpha_Y]^T},
    \label{eq:relation_bar_B_and_B}
\end{align}
which is the observation matrix $B$ with scaled columns.

From \eref{eq:relation_bar_B_and_B}, it is clear that each column of $\bar{B}$ is colinear
with each corresponding column of $B$. Hence, we seek a diagonal matrix that properly
normalizes $\bar{B}$:
\begin{equation}
    B = \bar{B} \diag{d},
    \label{eq:postulated_relation_B_and_bar_B}
\end{equation}
where $d \in \Rb^Y$ is the vector of how much each column should be scaled. By
multiplying \eref{eq:postulated_relation_B_and_bar_B} from the right by $\ones$ and
employing the sum-to-one property of $B$, we obtain that the following should hold
\begin{align}
    B \ones = \bar{B} \diag{d} \ones &\implies \notag \\
    \ones = \bar{B} d.
\end{align}
Note that a solution to this equation exists by \eref{eq:relation_bar_B_and_B} and
that the $\alpha_y$:s are non-zero -- each element of $d$ is simply the inverse of each
$\alpha_y$.  Now, since $B$ is full column rank and the $\alpha_y$:s are non-zero,
relation \eref{eq:relation_bar_B_and_B} implies that $\bar{B}$ is also full column rank.
Hence, the unique vector of normalization factors $d$ is
\begin{equation}
    d = \bar{B}^\dagger \ones.
\end{equation}

\section{Proof of \lref{lem:dim_of_nullspace}}

Recall the following rank-result for Kronecker products\cite{horn_topics_1991}:
\begin{equation}
    \rank{A \otimes B} = \rank{A} \rank{B},
\end{equation}
which implies that 
\begin{equation}
    \rank{\pi_{k-1}^T \otimes [\pi_k \ones^T - I]} = 1 \times \rank{\pi_k \ones^T - I}.
\end{equation}
The last factor $\rank{\pi_k \ones^T - I}$ is equal to $X - 1$, since it is a rank-1
perturbation to the identity matrix.

\balance

\section{Proof of Remark~\ref{rem:obtaining_observations}}

To see that a unique observation can be reconstructed at each time $k$, suppose that
$\pi_k = T(\pi_{k-1}, y_k; P, B) = T(\pi_{k-1}, \tilde{y}_k; P, B)$ and observe that:
\begin{align}
    T(\pi, y; P, B) &= T(\pi, \tilde{y}; P, B) \notag \implies \\
\frac{\diag{b_{y}} P^T \pi}{\ones^T \diag{b_{y}} P^T \pi}
&= \frac{\diag{b_{\tilde{y}}} P^T \pi}{\ones^T \diag{b_{\tilde{y}}} P^T \pi}
\implies \notag \\
\diag{b_{y}} P^T \pi &= \alpha \diag{b_{\tilde{y}}} P^T \pi,
\label{eq:proof_obtain_obs_1}
\end{align}
where $\pi \in \Delta$ is a posterior and $\alpha \in \Rb_{> 0}$ a positive scalar.
Continuing, we have that equation \eref{eq:proof_obtain_obs_1} implies
\begin{align}
    \left( \diag{b_{y}} - \alpha \diag{b_{\tilde{y}}} \right) P^T \pi &= 0 \implies \notag \\
    \diag{b_{y} - \alpha b_{\tilde{y}}} P^T \pi &= 0 \implies \notag \\
    \left[ b_{y} - \alpha b_{\tilde{y}} \right]_i \left[P^T \pi\right]_i = 0,
\end{align}
for $i = 1, \dots, X$. Under \asref{as:P_B_positive}, we have that $\left[P^T \pi\right]_i
> 0$, so that we must have $\left[ b_{y} - \alpha b_{\tilde{y}} \right]_i = 0$, for $i
= 1, \dots, X$. Or, equivalently, 
\begin{equation}
    b_{y} = \alpha b_{\tilde{y}}.
\end{equation}
This yields, under \asref{as:P_B_full_column}, that $\alpha = 1$ and $y = \tilde{y}$.

\section{Proof of \thrmref{thrm:reconstruct_belief_from_action}}

In essence, \cite[Theorem 1]{mattila_estimating_2019} amounts to writing down the
Karush-Kuhn-Tucker conditions for \eref{eq:myopic_adversary} and considering the posterior
as an unknown variable.  It follows directly from this result that the adversary could
have held a belief $\pi_k$ when making the decision $u_k$ if and only if
\begin{equation}
    \pi_k \in \left\{ \pi \in \Delta : \sum_{i=1}^X [\pi]_i \nabla_u c(i, u_k) = 0 \right\}.
    \label{eq:adversary_belief_set}
\end{equation}
The set in \eref{eq:adversary_belief_set} can be rewritten on matrix-vector form as
\begin{align}
    %\pi_k \in 
    \left\{ \pi \in \Rb_{\geq 0}^X:
    %\begin{aligned}
        %\pi &\geq 0, \\
        \begin{bmatrix}
            \nabla_\fact c(1, \fact_k) & \dots & \nabla_\fact c(X, \fact_k) \\
            1 & \dots & 1
        \end{bmatrix} \pi =
        \begin{bmatrix}
            0 \\
            \vdots \\
            0 \\
            1
        \end{bmatrix}
    %\end{aligned}
    \right\},
    \label{eq:adversary_belief_set_matrix_form}
\end{align}
which is non-empty by the fact that the adversary made a decision. Since, by
\asref{as:rank_of_cost_function}, the matrix
\begin{equation}
    F(\fact_k) =
    \begin{bmatrix}
        \nabla_\fact c(1, \fact_k) & \dots & \nabla_\fact c(X, \fact_k) \\
        1 & \dots & 1
    \end{bmatrix}
\end{equation}
has full column rank, the set \eref{eq:adversary_belief_set} -- or, equivalently
\eref{eq:adversary_belief_set_matrix_form} -- is singleton and the sole posterior in it is
\begin{equation}
    \pi_k = F(u_k)^\dagger  
        \begin{bmatrix}
            0 \\
            \vdots \\
            0 \\
            1
        \end{bmatrix},
\end{equation}
where $^\dagger$ denotes pseudo-inverse.

\end{document}